\DeclareMathAlphabet{\mathpzc}{OT1}{pzc}{m}{it}
\newcommand{\figref}[1]{Fig.~\ref{fig:#1}}
\newcommand{\ignore}[1]{}
\newcommand{\s}{s}   
\newcommand{\str}{t}
\newcommand{\eos}{\texttt{\#}}
\newcommand{\pad}{\ensuremath{\sqcup}}
\newcommand{\sos}{\emptystring}
\newcommand{\seos}[1]{{#1}^{\pad,\eos}}
\newcommand{\reset}{\rho}
\newcommand{\true}{true}
\newcommand{\Ts}{w}   
\newcommand{\Tstr}{v}
\newcommand{\hd}{d_{M}}
\newcommand{\lone}[1]{||{#1}||_1}
\newcommand{\diff}{\mathtt{diff}}
\newcommand{\dist}{d}
\newcommand{\out}{w'} 
\newcommand{\Rat}{\mathbb{Q}}
\newcommand{\Reals}{\mathbb{R}}
\newcommand{\Ratp}{\mathbb{Q}^+}
\newcommand{\Realsp}{\mathbb{R}^+}
\newcommand{\Realsinf}{\mathbb{\Reals}^\infty}
\newcommand{\funcDefinedBy}[1]{\ensuremath{\llbracket #1\rrbracket}}
\newcommand{\eps}{\epsilon}
\newcommand{\bound}{B}
\newcommand{\dur}{\lambda}
\newcommand{\tran}{\ensuremath{\mathpzc{T}}}
\newcommand{\trans}{\ensuremath{\mathpzc{T}}}
\newcommand{\R}{R}
\newcommand{\emptystring}{\eps}
\newcommand{\es}{\phi}
\newcommand{\ftime}[1]{{#1}_{C}}
\newcommand{\timed}[1]{\textrm{timed}({#1})}
\newcommand{\wiggle}{\lambda}
\newcommand{\Wiggles}{\Lambda}
\newcommand{\dtM}{d_{TM}}
\newcommand{\dad}{d_{AD}}
\newcommand{\dS}{d_{S}}
\newcommand{\timedom}{I}
\newcommand{\twa}{weighted timed automaton}
\newcommand{\wta}{WTA}
\newcommand{\symb}{s}
\newcommand{\tim}{\delta}
\tikzstyle{smalltext}=[font=\fontsize{7}{8}\selectfont]
\tikzstyle{captiontext}=[font=\fontsize{8}{8}\selectfont]
\tikzstyle{state}=[draw, ellipse, minimum height=5mm,
\newcommand{\rej}{\text{rej}}
\newcommand{\fst}{{\sc fst}\xspace}
\newcommand{\nlogspace}{\textsc{NLogspace}}
\newcommand{\pspace}{\textsc{PSpace}}
\newcommand{\lpair}[2]{\langle {#1}, {#2} \rangle}
\newcommand{\alphI}{\Sigma}
\newcommand{\alphO}{\Gamma}
\newcommand{\baut}{\bar{\aut}}
\newcommand{\blaut}{\baut^L}
\newcommand{\braut}{\baut^R}
\newcommand{\bautI}{\baut_\alphI}
\newcommand{\bautO}{\baut_\alphO}
\newcommand{\dI}{d_{\alphI}}
\newcommand{\dO}{d_{\alphO}}
\newcommand{\diffI}{\diff_{\alphI}}
\newcommand{\diffO}{\diff_{\alphO}}
\newcommand{\dom}[1]{\mathrm{dom}(#1)}
\newcommand{\N}{{\cal N}}
\newcommand{\aut}{{\cal A}}
\newcommand{\autI}{{\cal A}_{\alphI}}
\newcommand{\autO}{{\cal A}_{\alphO}}
\newcommand{\Q}{{\mathbb{Q}}}
\newcommand{\lang}{{\cal L}}
\newcommand{\Acc}{\mathsf{Acc}}
\newcommand{\fsum}{\textsc{Sum}}
\newcommand{\timedL}[1]{{\cal TL}({#1})}
\newcommand{\untimed}[1]{\textrm{untimed}(#1)}
\newcommand{\transition}[1]{\rightarrow^{#1}}
\newcommand{\transitionComb}[2]{\rightarrow^{#1}_{#2}}
\newcommand{\traces}[1]{\boldsymbol{[}{#1}\boldsymbol{]}}
\newcommand{\asc}{{ASC\xspace}}
\newcommand{\ascs}{{ASCs\xspace}}
\newcommand{\coc}{{CC}}
\newcommand{\ckt}{{\cal C}}
\newcommand{\Inp}{{\cal I}}
\newcommand{\Out}{{\cal O}}
\newcommand{\Y}{{\cal Y}}
\newcommand{\Z}{{\cal Z}}
\newcommand{\ltuple}[2]{\{ {#1}^1, \ldots, {#1}^{#2} \}}
\newcommand{\genericAscDiagram}{
\begin{tikzpicture}

\node[rectangle,draw,minimum height=3.3cm,minimum width=2.4cm] at (0.2,0.3) {combinatorial};
\node[] at (0.2,-0.3) {circuit};

\begin{scope}
\draw[->] (-1.8,1.8) to node[above] {$i^1$} (-1,1.8);
\node[] at (-1.5, 1.62) {$\vdots$};
\draw[->] (-1.8,0.8) to node[above] {$i^m$} (-1,0.8);
\end{scope}

\begin{scope}[xshift=0.4cm]
\draw[->] (1,1.8) to node[above] {$o^1$} (1.8,1.8);
\node[] at (1.3, 1.62) {$\vdots$};
\draw[->] (1.0,0.8) to node[above] {$o^n$} (1.8,0.8);
\end{scope}

\begin{scope}[yshift=-2cm]
\draw[->] (-2.0,1.8) to node[above] {$y^1$} (-1,1.8);
\node[] at (-1.5, 1.62) {$\vdots$};
\draw[->] (-1.8,0.8) to node[above] {$y^k$} (-1,0.8);

\draw[-] (-1.8,0.8) to (-1.8,0.3);
\draw[-] (-1.8,0.3) to (-0.3,0.3);
\node[rectangle,draw,minimum height=0.5cm,minimum width=0.6cm] at (0,0.3) {$d^k$};
\draw[->] (2.2,0.3) to (0.3,0.3);
\draw[-] (2.2,0.8) to (2.2,0.3);
\draw[-] (2.2,0.8) to node[above] {$z^k$} (1.4,0.8);

\node[] at (0, -0.1) {$\vdots$};

\draw[-] (-2,1.8) to (-2,-0.8);
\draw[-] (-2,-0.8) to (-0.3,-0.8);
\node[rectangle,draw,minimum height=0.5cm,minimum width=0.6cm] at (0,-0.8) {$d^1$};
\draw[->] (2.4,-0.8) to (0.3,-0.8);
\draw[-] (2.4,1.8) to (2.4,-0.8);
\draw[-] (2.4,1.8) to node[above] {$z^1$} (1.4,1.8);

\node[] at (1.7, 1.53) {$\vdots$};
\end{scope}
\end{tikzpicture}
}
\newcommand{\oscillatorDiagram}
{
\begin{tikzpicture}

\node[circle,draw,minimum size=0.5cm] (V) at (0,-0.5) {$\vee$};
\node[rectangle,draw,minimum height=0.5cm,minimum width=0.5cm] (D) at (0,-1.7) {delay $1$};
\node[]  at (0,-2.3) {Circuit $\ckt$.};

\draw[-] (-1.3,0.3) to node[above] {$i$} (-0.8,0.3);
\draw[->] (-0.8,0.3) to (V);
\draw[->] (-1,-1.5) to node[above]{$y$} (V);
\draw[-] (-1,-1.5) to (-1,-1.7);
\draw[-] (-1,-1.7) to (D.west);

\draw[-] (V) to (1.1,-0.5);
\draw[-] (1.1,-0.5) to  (1.1,-1.7);
\node[circle,fill,inner sep = 0pt,
      minimum size = 2pt] (A) at (1.1, -0.5) {};
\draw[->] (1.1,-1.7) to node [above] {$z$}  (D.east);
\draw[->] (1.1,-0.5) to node[above]{$o$} (1.5,-0.5);

\begin{scope}[xshift=3.3cm]
\node[circle,draw,minimum size=0.5cm] (V) at (0,-0.5) {$\vee$};
\node[rectangle,draw,minimum height=0.5cm,minimum width=0.5cm] (D) at (0,-1.7) {delay $1$};
\draw[-] (-1.3,0.3) to node[above] {$i$} (-0.8,0.3);
\draw[->] (-0.8,0.3) to (V);
\draw[->] (-1,-1.5) to node[above]{$y$} (V);
\draw[-] (-1,-1.5) to (-1,-1.7);
\draw[-] (-1,-1.7) to (D.west);

\draw[->] (V) to  (1.6,-0.5);
\node[] at (1.4,-0.3) {$o$};
\draw[-] (1.1,0.3) to  (1.1,-1.7);
\draw[-] (1.1,0.3) to  (-0.8,0.3);
\draw[->] (1.1,-1.7) to node [above] {$z$}  (D.east);
\node[]  at (0,-2.3) {Circuit $\ckt'$.};
\node at (0,4.5) {};
\end{scope}

\end{tikzpicture}
}
\newcommand{\tint}[3]{
\node[circle,fill,inner sep = 0pt,
      minimum size = 2pt] (A) at (#2, #1) {};
\node[circle,draw,inner sep = 0pt,
      minimum size = 2pt] (B) at (#3, #1) {};
\draw (#2,#1) -- (#3-0.04,#1);
}
\newcommand{\joint}[3]{
\draw[densely dotted] (#3,#1) -- (#3,#2);
}
\newcommand{\oscillatorGraph}
{
\begin{tikzpicture}
\newcommand{\spread}{0.8}
\newcommand{\perturb}{0.3}
\draw[->] (0,0) -- (0,3.5);
\draw (0,0) -- (0,-0.3);

\draw[->] (0,0) -- (6,0);
\draw (0,0) -- (-0.3,0);

\node at (-0.2,0.5) {$o_2'$};
\node at (-0.2,1.5) {$o_2$};
\node at (-0.2,2.5) {$i_2$};

\draw (-0.05,1) -- (0.05,1);
\draw (-0.05,2) -- (0.05,2);
\draw (-0.05,3) -- (0.05,3);

\draw[dashed] (0.05,1) -- (6.05,1);
\draw[dashed] (0.05,2) -- (6.05,2);
\draw[dashed] (0.05,3) -- (6.05,3);

\foreach \x in {1,2,3,4,5,6}
\draw (\x*\spread,-0.05) -- node[below] {$\x$} (\x*\spread,0.05);

\tint{2.6}{0}{\perturb}
\tint{2.3}{\perturb}{6}
\joint{2.6}{2.3}{\perturb}


\foreach \x in {0,1,2,3,4,5}
{
\joint{1.6}{1.3}{\x*\spread}
\tint{1.6}{\x*\spread}{\x*\spread +\perturb}
\ifthenelse{\x=5}{\tint{1.3}{\x*\spread+ \perturb}{6}}{\tint{1.3}{\x*\spread+ \perturb}{\x*\spread+\spread}}
\joint{1.6}{1.3}{\x*\spread+\perturb}
}

\tint{0.6}{0}{\perturb}
\tint{0.3}{\perturb}{\spread}
\joint{0.6}{0.3}{\perturb}

\joint{0.6}{0.3}{\spread}

\tint{0.6}{\spread}{\perturb+\spread}
\tint{0.3}{\perturb+\spread}{6}
\joint{0.6}{0.3}{\perturb+\spread}


\end{tikzpicture}
}
\newcommand{\cadlagGraph}{
\begin{tikzpicture}
\draw[->] (0,0) -- (0,3);
\draw (0,0) -- (0,-0.3);

\draw[->] (0,0) -- (6,0);
\draw (0,0) -- (-0.3,0);

\node at (-0.2,0.75) {a};
\node at (-0.2,1.5) {b};
\node at (-0.2,2.25) {c};

\draw (-0.05,0.75) -- (0.05,0.75);
\draw (-0.05,1.5) -- (0.05,1.5);
\draw (-0.05,2.25) -- (0.05,2.25);

\node at (1.3,-0.3) {1.3};
\node at (2,-0.3) {2};
\node at (3.7,-0.3) {3.7};
\node at (5,-0.3) {5};

\draw (1.3,-0.05) -- (1.3,0.05);
\draw (2,-0.05) -- (2,0.05);
\draw (2.9,-0.05) -- (2.9,0.05);
\draw (3.7,-0.05) -- (3.7,0.05);
\draw (5,-0.05) -- (5,0.05);

\tint{0.75}{0}{1.3}
\tint{1.5}{1.3}{2}
\tint{0.75}{2}{3.7}
\tint{2.25}{3.7}{5}
\node[circle,fill,inner sep = 0pt,
      minimum size = 2pt] (A) at (2.9, 0.75) {};
\node[circle,fill,inner sep = 0pt,
      minimum size = 2pt] (A) at (5, 0.75) {};
\end{tikzpicture}
}
\newcommand{\littledots}[2]{
\node[] at (#1,#2) {.};
\node[] at (#1,#2+0.11) {.};
\node[] at (#1,#2-0.11) {.};
}
\newcommand{\succintReachabilityCircuit}{
\begin{tikzpicture}
\newcommand{\ax}{3}
\newcommand{\bx}{2}
\newcommand{\by}{-1}
\newcommand{\cy}{\by-1.0}

\littledots{0.4}{-0.2}
\littledots{\ax+1.5}{-0.2}
\littledots{\bx+1.5}{\cy-0.25}
\littledots{\bx-0.4}{\cy-0.25}

\node at (0.1,-0.2) {$\vec{w}$};
\node at (\ax+2.2,-0.2) {$\vec{u}$};
\node at (\bx+0.8,\cy-0.25) {$\vec{u}$};
\node at (\bx-0.8,\cy-0.25) {$\vec{v}$};

\node[circle,draw,minimum size=0.4cm,inner sep=1pt] (AndOne)  at (\ax+0.9, 0) {$\wedge$}; 
\node[circle,draw,minimum size=0.4cm,inner sep=1pt] (AndTwo) at (\ax+0.25, -0.4) {$\wedge$}; 
\draw[->] (0,0) to (AndOne.west);
\draw[->] (0,-0.4) to (AndTwo.west);


\draw[->] (AndOne.east) to (\ax+2.5,0);
\draw[->] (AndTwo.east) to (\ax+2.5,-0.4);

\node[rectangle,draw,minimum height=0.7cm,minimum width=1.2cm] (Equality) at (\ax+3.1,-0.2) {$\vec{u} = \vec{t}$};

\draw[-] (\bx-1.2,0) to (\bx-1.2, \by+0.15);
\draw[->] (\bx-1.2, \by+0.15) to (\bx-0.7, \by+0.15);

\draw[-] (\bx-1.0,-0.4) to (\bx-1.0, \by+0.25);
\draw[->] (\bx-1.0, \by+0.25) to (\bx-0.7, \by+0.25);

\node[rectangle,draw,minimum height=0.3cm,minimum width=0.3cm] (DelOne) at (\bx,\cy-0.5) {$1$};
\draw[->] (\bx-1.2, \by-0.15) to (\bx-0.7, \by-0.15);
\draw[-] (\bx-1.2,\cy-0.5) to (\bx-1.2, \by-0.15);
\draw[-] (\bx-1.2,\cy-0.5) to (DelOne.west);
\draw[<-] (DelOne.east) to (\ax+1.9, \cy-0.5);
\draw[-] (\ax+1.9, \cy-0.5) to (\ax+1.9, 0);

\node[rectangle,draw,minimum height=0.3cm,minimum width=0.3cm] (DelTwo) at (\bx,\cy) {$1$};
\draw[->] (\bx-1.0, \by-0.25) to (\bx-0.7, \by-0.25);
\draw[-] (\bx-1.0,\cy) to (\bx-1.0, \by-0.25);
\draw[-] (\bx-1.0,\cy) to (DelTwo.west);
\draw[<-] (DelTwo.east) to (\ax+1.7, \cy);
\draw[-] (\ax+1.7, \cy) to (\ax+1.7, -0.4);

\node[rectangle,draw,minimum height=0.8cm,minimum width=1.4cm] at (\bx,\by) {$E(\vec{v},\vec{w})$};

\draw[-] (\bx+0.7,\by) to (\ax+0.9,\by);
\draw[->] (\ax+0.25,\by) to (\ax+0.25, -0.6);
\draw[->] (\ax+0.9,\by) to (\ax+0.9, -0.2);

\node[rectangle,draw,minimum height=0.7cm,minimum width=1.2cm] (Oscillator) at (\ax+3.9,-1.5) {Oscillator};
\draw[-] (Equality.east) to (\ax+3.9,-0.2);
\draw[->] (\ax+3.9,-0.2) to (Oscillator.north);

\draw[->] (Oscillator.east) to (\ax+5.3,-1.5);
\node at (\ax+5.0,-1.3) {$o$};

\end{tikzpicture}
}
\newcommand{\peak}[3]
{
\draw[-, thick] (#1,#2) to (#1,#3);
\node[circle,fill,inner sep = 0pt,
      minimum size = 2pt] (A) at (#1, #3) {};
}
\newcommand{\OfflineOnlineExampleGraph}
{
\begin{tikzpicture}

\newcommand{\spread}{1.0}
\newcommand{\perturb}{0.15}

\draw[->] (0,0) -- (0,4.1);
\draw (0,0) -- (0,-0.3);

\draw[->] (0,0) -- (5.0,0);
\draw (0,0) -- (-0.3,0);

\node at (-0.5,3.5) {$i_1$};
\node at (-0.5,2.5) {$i_2$};
\node at (-0.5,1.5) {$o_1$};
\node at (-0.5,0.5) {$o_2$};

\node at (-0.2,3.6) {\tiny $r$};
\node at (-0.2,3.3) {\tiny $\emptyset$};
\draw[-] (-0.05,3.6) to (0.05,3.6);
\draw[-] (-0.05,3.3) to (0.05,3.3);

\node at (-0.2,2.6) {\tiny $r$};
\node at (-0.2,2.3) {\tiny $\emptyset$};
\draw[-] (-0.05,2.6) to (0.05,2.6);
\draw[-] (-0.05,2.3) to (0.05,2.3);

\node at (-0.2,1.6) {\tiny $\top$};
\node at (-0.2,1.3) {\tiny $\bot$};
\draw[-] (-0.05,1.6) to (0.05,1.6);
\draw[-] (-0.05,1.3) to (0.05,1.3);

\node at (-0.2,0.6) {\tiny $\top$};
\node at (-0.2,0.3) {\tiny $\bot$};
\draw[-] (-0.05,0.6) to (0.05,0.6);
\draw[-] (-0.05,0.3) to (0.05,0.3);

\draw (-0.05,1) -- (0.05,1);
\draw (-0.05,2) -- (0.05,2);
\draw (-0.05,3) -- (0.05,3);
\draw (-0.05,4) -- (0.05,4);

\draw[dashed] (0.05,1) -- (4.85,1);
\draw[dashed] (0.05,2) -- (4.85,2);
\draw[dashed] (0.05,3) -- (4.85,3);

\draw[-] (0,3.3) to (\spread,3.3);
\peak{0}{3.3}{3.6}

\draw[-] (0,2.3) to (\spread+0.1,2.3);
\peak{0}{2.3}{2.6}

\foreach \x in {1,2,3,4}
{

\draw[-] (\x*\spread-\spread,3.3) to (\x*\spread,3.3);
\peak{\x*\spread}{3.3}{3.6}

\draw[-] (\x*\spread-\spread,2.3) to (\x*\spread+\perturb,2.3);
\draw (\x*\spread,-0.05) -- node[below] {$\x$} (\x*\spread,0.05);
\draw[dotted,-] (\x*\spread,0) to (\x*\spread,4);
\peak{\x*\spread+\x*\perturb}{2.3}{2.6}


\node[circle,fill,inner sep = 0pt,
      minimum size = 2pt] (A) at (\x*\spread, 1.6) {};

\node[circle,fill,inner sep = 0pt,
      minimum size = 2pt] (A) at (\x*\spread+\x*\perturb, 0.3) {};

}

\draw[-] (4*\spread,3.3) to (5*\spread,3.3);
\draw[-] (4*\spread,2.3) to (5*\spread,2.3);

\draw[-] (\spread,1.6) to (5*\spread,1.6);
\draw[-] (\spread+\perturb,0.3) to (5*\spread,0.3);

\begin{scope}[xshift=6.5cm]
\draw[->] (0,0) -- (0,4.1);
\draw (0,0) -- (0,-0.3);

\draw[->] (0,0) -- (5.0,0);
\draw (0,0) -- (-0.3,0);

\node at (-0.5,3.5) {$i_1$};
\node at (-0.5,2.5) {$i_2$};
\node at (-0.5,1.5) {$o_1'$};
\node at (-0.5,0.5) {$o_2'$};

\node at (-0.2,3.6) {\tiny $r$};
\node at (-0.2,3.3) {\tiny $\emptyset$};
\draw[-] (-0.05,3.6) to (0.05,3.6);
\draw[-] (-0.05,3.3) to (0.05,3.3);

\node at (-0.2,2.6) {\tiny $r$};
\node at (-0.2,2.3) {\tiny $\emptyset$};
\draw[-] (-0.05,2.6) to (0.05,2.6);
\draw[-] (-0.05,2.3) to (0.05,2.3);

\node at (-0.2,1.6) {\tiny $\top$};
\node at (-0.2,1.3) {\tiny $\bot$};
\draw[-] (-0.05,1.6) to (0.05,1.6);
\draw[-] (-0.05,1.3) to (0.05,1.3);

\node at (-0.2,0.6) {\tiny $\top$};
\node at (-0.2,0.3) {\tiny $\bot$};
\draw[-] (-0.05,0.6) to (0.05,0.6);
\draw[-] (-0.05,0.3) to (0.05,0.3);

\draw (-0.05,1) -- (0.05,1);
\draw (-0.05,2) -- (0.05,2);
\draw (-0.05,3) -- (0.05,3);
\draw (-0.05,4) -- (0.05,4);

\draw[dashed] (0.05,1) -- (4.85,1);
\draw[dashed] (0.05,2) -- (4.85,2);
\draw[dashed] (0.05,3) -- (4.85,3);

\draw[-] (0,3.3) to (\spread,3.3);
\peak{0}{3.3}{3.6}

\draw[-] (0,2.3) to (\spread+0.1,2.3);
\peak{0}{2.3}{2.6}

\node[circle,fill,inner sep = 0pt,
      minimum size = 2pt] (A) at (0, 1.6) {};

\node[circle,fill,inner sep = 0pt,
      minimum size = 2pt] (A) at (0, 0.6) {};

\foreach \x in {1,2,3,4}
{
\draw (\x*\spread,-0.05) -- node[below] {$\x$} (\x*\spread,0.05);
\draw[dotted,-] (\x*\spread,0) to (\x*\spread,4);

\draw[-] (\x*\spread-\spread,3.3) to (\x*\spread,3.3);
\peak{\x*\spread}{3.3}{3.6}

\draw[-] (\x*\spread-\spread,2.3) to (\x*\spread+0.1,2.3);
\peak{\x*\spread+\x*\perturb}{2.3}{2.6}


\node[circle,fill,inner sep = 0pt,
      minimum size = 2pt] (A) at (\x*\spread, 1.6) {};


\draw[-] (\x*\spread+\x*\perturb-\perturb-\spread,0.6) to (\x*\spread+\x*\perturb-\perturb,0.6);
\draw[-] (\x*\spread+\x*\perturb-\perturb,0.3) to (\x*\spread+\x*\perturb,0.3);
}

\draw[-] (4*\spread,3.3) to (5*\spread,3.3);
\draw[-] (4*\spread,2.3) to (5*\spread,2.3);

\draw[-] (0,1.6) to (5*\spread,1.6);
\draw[-] (4*\spread+4*\perturb,0.6) to (4.5*\spread+4*\perturb,0.6);
\end{scope}

\end{tikzpicture}
}
\title{Lipschitz Robustness of Timed I/O Systems}
\author{Thomas A. Henzinger, Jan Otop and  Roopsha Samanta}
\institute{IST Austria \\
\email{\{tah,jotop,rsamanta\}@ist.ac.at}}
\begin{document}

\maketitle

\begin{abstract}
We present the first study of robustness of systems that are both timed as well as reactive (I/O). 
We study the behavior of such timed I/O systems in the presence of {\em uncertain inputs} 
and formalize their robustness using the analytic notion of Lipschitz continuity. 
Thus, a timed I/O system is $K$-(Lipschitz) 
robust if the perturbation in its output is at most $K$ times the perturbation in its input. 
We quantify input and output perturbation using {\em similarity functions} 
over timed words such as the timed version of the Manhattan distance
 and the Skorokhod distance.
We consider two models of timed I/O systems --- timed transducers 
and asynchronous sequential circuits. 
 While $K$-robustness is undecidable even for discrete transducers, 
we identify a class of timed transducers which admits a polynomial space
decision procedure for $K$-robustness. 
For asynchronous sequential circuits, we reduce $K$-robustness w.r.t. timed Manhattan distances 
to $K$-robusness of discrete letter-to-letter transducers and show 
\pspace-compeleteness of the problem.


\end{abstract}

\section{Introduction}
\label{s:intro}
\setlength{\floatsep}{-80pt}

Real-time systems operating in physical environments are increasingly
commonplace today.
An inherent problem faced by such computational systems is {\em
input uncertainty} caused by sensor inaccuracies, imprecise environment assumptions etc. 
This means that the input data may be noisy and/or may have timing errors.
Hence, it is not enough for such a timed I/O system to be functionally correct.  
It is also desirable that the system be {\em continuous} or {\em robust}, i.e., the system
behavior degrade smoothly in the presence of input disturbances
\cite{Henzinger08}. We illustrate this property with two examples of timed I/O systems. 

\begin{example}
\label{ex:tick}
Consider two timed I/O systems which process a sequence of ticks and 
calibrate the intervals between the ticks (see Fig. \ref{f:ticktime}). In particular, the 
goal is to track if an interval is greater than some given $\Delta$.
The first timed I/O system $\trans$ is an {\em offline} processor; 
upon arrival of each request, $\trans$ 
waits till the next request, and outputs $\top$ if 
the interval is less than or equal to $\Delta$ and $\bot$ otherwise. 
The second timed I/O system $\trans'$ is an {\em online} processor;
$\trans'$ starts generating $\top$ immediately upon arrival of each request, 
and switches its output to $\bot$ after $\Delta$ time, until the arrival of the next request.

Consider two periodic tick sequences: $i_1$ and $i_2$ as shown in Fig. \ref{f:ticktime}.
The duration between ticks in $i_1$, $i_2$ is $\Delta$, $\Delta+\eps$, respectively. 
Thus $i_2$ can be viewed as a timing distortion of $i_1$. 
While the output $o_1$ of $\trans$ on $i_1$ is a constant sequence of $\top$,
the output $o_2$ of $\trans$ on $i_2$ consists of $\bot$ entirely. Thus, a small timing 
perturbation in the input of $\trans$ can cause a large perturbation in its output. 
On the other hand, a small timing perturbation in the input of $\trans'$ only causes a 
proportionally small perturbation in its output. Indeed, while the output 
$o_1'$ of $\trans'$ on $i_1$ is also a constant sequence of $\top$, 
the output $o_2'$ of $\trans'$ on $i_2$ is a sequence of $\top$, with periodic $\bot$ intervals of $\eps$-duration.  
\end{example}

\begin{figure}
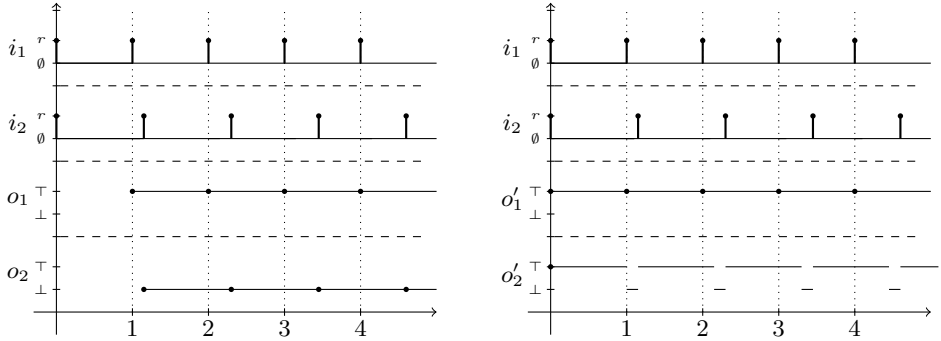

\begin{center}
\OfflineOnlineExampleGraph
\end{center}
\caption{Timing distortion}
\label{f:ticktime}
\end{figure}

\begin{example}
\label{ex:oscillator}
Consider two asynchronous sequential circuits $\ckt$ and $\ckt'$ shown in Fig. \ref{f:osc}.  
For each circuit, the input is $i$, the output is $i \vee y$ and the value of variable $y$ at time $t$ equals the value of variable $z$ at time $t-1$. 
In circuit $\ckt$, variable $z$ equals $i \vee y$ and in circuit $\ckt'$, variable $z$ equals $i$. Initially $y$ is set to $0$.
\begin{figure}[t!h]
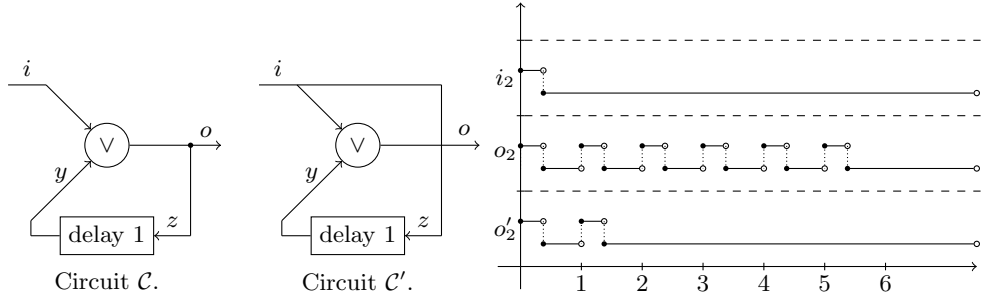

\oscillatorDiagram
\bigskip
\oscillatorGraph
\caption{Transient fault}
\label{f:osc}
\end{figure}

Consider inputs $i_1$ and $i_2$ such that $i_1$ is constantly $0$,  
and $i_2$ is $1$ in the interval $[0,\eps)$ and $0$ otherwise (see Fig. \ref{f:osc}).
Thus, $i_2$ can be viewed as representing a transient fault in $i_1$. 
The outputs of both $\ckt$ and $\ckt'$ for $i_1$ are constantly $0$.
For $i_2$, $\ckt$ produces a periodic sequence that equals $1$ exactly in the intervals 
$[0,\eps), [1,1+\eps), [2,2+\eps) \ldots$, whereas $\ckt'$ produces the output that equals $1$ only in the intervals
$[0,\eps)$ and $[1,1+\eps]$. Thus, the effect of a small input perturbation propagates forever 
in the output of $\ckt$. On the other hand, the effect of a small input perturbation is limited to a bounded time 
in the output of $\ckt'$. 

\end{example}

We present the first study of robustness of systems 
that are both timed as well as reactive (I/O).  
We formalize robustness of timed I/O systems as Lipschitz
continuity~\cite{foundation,DBLP:journals/corr/HenzingerOS14}. A function is Lipschitz-continuous if its output changes
proportionally to every change in the input.  Given a constant $K$ and
{\em similarity functions} $\dI$, $\dO$ for computing the input, output
perturbation, respectively, a timed I/O system $\tran$ is defined to be
$K$-Lipschitz robust (or simply, $K$-robust) w.r.t. $\dI$, $\dO$ if for all
timed words $\Ts,\Tstr$ in the domain of $\tran$ with finite $\dI(\Ts,\Tstr)$, 
$\dO(\tran(\Ts),\tran(\Tstr)) \le K \dI(\Ts,\Tstr)$. 

In this work, we focus on $K$-robustness of two models of timed I/O systems ---
timed transducers (Ex. \ref{ex:tick}) and asynchronous sequential circuits (\ascs) (Ex. \ref{ex:oscillator}). 
We define a timed transducer as a timed automaton over an alphabet 
partitioned into an input alphabet $\dI$ and an output alphabet $\dO$.  
A timed transducer defines a transduction over timed words, or a 
{\em timed relation}. 
An \asc~is composed of a combinational circuit (CC),  
{\em delay elements} and {\em feedback loops} (see, for instance, Fig. \ref{f:osc}).
An \asc~also defines a timed relation. 
However, timed transducers and \ascs~are are expressively incomparable. 
A simple \asc~that delays its inputs by $1$ time unit is not expressible by timed transducers ---
intuitively, the timed transducer at time $1$ would need to remember arbitrarily many timed events from the interval $[0,1)$.
Conversely, a simple timed transducer that outputs $1$ if the duration between preceding input events is greater than $1$, and $0$
otherwise cannot be expressed by any~\asc.

Since $K$-robustness is undecidable for discrete
transducers \cite{DBLP:journals/corr/HenzingerOS14}, it is also undecidable in
general for our timed transducers. 
We identify a class of timed transducers, called {\em timed-synchronized transducers}, 
which admit decidable $K$-robustness. 
This class includes {\em timed Mealy machines}, i.e., 
timed transducers that accept timed words with alternating input and output letters. 
The key idea behind decidability is a reduction of $K$-robustness of timed-synchronized transducers 
to emptiness of weighted timed automata, given similarity functions computable by weighted timed automata. 
In particular, our results for timed-synchronized transducers include the following:
\begin{compactenum}
\item $K$-robustness 
is \pspace-complete for timed Manhattan distances.
\item $K$-robustness 
is \pspace-complete for accumulated delay distances,  
under practically-viable environment assumptions (e.g., minimum symbol persistence).
\item $K$-robustness 
is \pspace-complete if the input perturbation is computed 
as a Skorokhod distance and the output perturbation is computed 
as a timed Manhattan distance. 
\end{compactenum}

We reduce $K$-robustness of \ascs~w.r.t. timed
Manhattan distances to $K$-robustness of discrete letter-to-letter transducers, 
and show that $K$-robustness of \ascs~is \pspace-complete. The reduction consists 
of two steps. 
First, we show that on inputs that are step functions, \ascs~behave like discrete letter-to-letter transducers. 
Second, we show that if an \asc~is not $K$-robust w.r.t. timed Manhattan distances, 
there exists a witness consisting of a pair of inputs that are step functions.



The paper is organized as follows. We first recall necessary formalisms~(Sec. \ref{s:preliminaries}) 
and present our models of timed I/O systems~(Sec. \ref{s:formalModels}). We formalize our notion of robustness 
for such systems~(Sec. \ref{s:problem}) and define the similarity functions of interest~(Sec. \ref{s:formalDistances}). 
We then present our results on robustness analysis of timed transducers~(Sec. \ref{s:robustnessAutomata})
and \ascs~(Sec. \ref{s:robustnessCircuits}) w.r.t. various similarity functions. \\

\noindent {\bf Related work}.
Robustness of systems has been studied in different contexts such as robust
control \cite{ZDG96}, timed automata \cite{GHJ97}, discrete transducers
\cite{foundation,DBLP:journals/corr/HenzingerOS14} and sequential circuits~\cite{DHLN10}.
However, none of these results are directly applicable to robustness 
of timed I/O systems. There are two main reasons. First, we are interested in 
robustness w.r.t. {\em input} perturbation. Second, 
timed I/O systems exhibit both discrete and continuous behavior.
Robust control typically involves reasoning about continuous
state-spaces and focuses on designing controllers that function properly in the
presence of perturbation in various internal parameters of a system's model.
The study of robustness of timed automata 
focuses on the design of models whose language is robust to infinitesimal timing 
perturbation (e.g. clock drifts). This work does not explicitly consider input perturbation, 
nor does it focus on quantifying the effect of input perturbation on the output. 
Robustness analysis of finite-state transducers is limited to 
purely discrete systems and data.  
In \cite{DHLN10}, the authors study the robustness of synchronous sequential circuits 
modeled as discrete Mealy machines.  Their notion
of robustness bounds the persistence of the effect of a sporadic disturbance and is also 
limited to discrete data.

In other related work~\cite{MRT13,CHR10,BGHJ09}, the authors develop
different notions of robustness for reactive systems, with
$\omega$-regular specifications, interacting with uncertain
environments. There has also been foundational work on continuity and robustness
analysis of software programs manipulating numbers \cite{RTSS09,CGL10,CGLN11}.  
%

\section{Preliminaries}
\label{s:preliminaries}
\newcommand{\op}{\ensuremath{\otimes}}
\newcommand{\buchi}{B\"uchi}


\subsection{Timed automata}

We briefly present basic notions regarding timed automata. We refer the reader to 
\cite{DBLP:conf/sfm/AlurM04} for a comprehensive survey on timed automata.

\noindent{\bf Timed words.} Let $\Realsp$, $\Ratp$ denote the set of all {\em nonnegative} real numbers, rational numbers, respectively. 
A (finite or infinite) \emph{timed word} over an alphabet $\Sigma$ is a word over $(\Sigma, \Realsp)$: 
$(a_0, t_0)(a_1, t_1) \ldots$ such that $t_0, t_1, \ldots$ is a weakly increasing sequence. 
A pair $(a,t)$ is referred to as \emph{an event}. We denote by 
$\timedL{\Sigma}$ the set of all timed words over $\Sigma$.
For a timed word $w = (a_0, t_0) (a_1, t_1) \ldots$ we define 
$\untimed{w} = a_0 a_1 \ldots$ as the projection of $w$ on the $\Sigma$ component.


\noindent{\bf Disjoint union of timed words.} Let $w_1, w_2$ be timed words over the alphabet  $\Sigma$.
We define \emph{the disjoint union} of $w_1$ and $w_2$, denoted $w_1 \oplus w_2$,
as the union of events of $w_1$ and $w_2$, annotated with the index of 
the word ($w_1$ or $w_2$) it belongs to. 
E.g. $\lpair{a}{0.4}\lpair{b}{2.1} \oplus \lpair{b}{0.3}\lpair{b}{0.4} = 
\lpair{(b,2)}{0.3}\lpair{(a,1)}{0.4}\lpair{(b,2)}{0.4}\lpair{(b,1)}{2.1}$.
The word $w_1 \oplus w_2$ is a timed word over the alphabet $\Sigma \times \{ 1,2\}$, 

\noindent{\bf Clocks.} Let $X$ be a set of clocks. 
A \emph{clock constraint} is a conjunction of terms of the form $x \op c$, 
where $x \in X$, $c \in \Ratp$ and $\op \in \{ <, \leq, =, \geq, > \}$. 
Let $B(X)$ denote the set of clock constraints. 
A \emph{clock valuation} $\nu$ is a mapping $\nu : X \mapsto \Realsp$.  

\noindent{\bf Timed automata.} A \emph{timed automaton} $\aut$ is a tuple $(\Sigma, L, l_0, X, 
\delta, F)$ where $\Sigma$ is the alphabet of $\cal A$, $L$ is a set of
locations, $l_0 \in L$ is the initial location, $X$ is a set of clocks, $\delta
\subseteq L \times \Sigma \times B(X) \times 2^X \times L$ is the switch
relation and ${F} \subseteq L$ is a set of accepting locations. 

\noindent{\bf Semantics of timed automata.} The semantics of a timed automaton $\aut$ is defined using an infinite-state
transition system $\ts{\aut}$ over the alphabet $(\Sigma \cup \{ \epsilon \})
\times \Realsp$.  A \emph{state} $q$ of $\ts{\aut}$ is a pair $(l,\nu)$
consisting of a location $l \in L$ and a clock valuation $\nu$. A state $q =
(l, \nu)$ satisfies a clock constraint $g$, denoted $q \models g$, if the
formula obtained from $g$ by substituting clocks from $X$ by their valuations
in $\nu$ is true.  There are two kinds of transitions in $\ts{\aut}$: $(i)$
\emph{elapse of time}: $(l, \nu) \transition{\tau} (l, \nu')$ iff for every $x
\in X$, $\nu'(x) = \nu(x) + \tau$ and $(ii)$ \emph{location switch}: $(l, \nu)
\transition{a} (l', \nu')$ iff there is a switch of ${\cal A}$, $(l, a, g,
\gamma, l')$, such that $(l, \nu) \models g$, and for each $x \in X$, 
$\nu'(x) = 0$ if $x \in \gamma$ and $\nu'(x) = \nu(x)$ otherwise. 
An elapse of time is usually followed by a
location switch. Thus we define the composition $\transition{\tau} \circ
\transition{a}$ and denote it as $\transitionComb{\tau}{a}$.
The initial state of $\ts{\aut}$ is the state $(l_0,\nu)$ where         
for each $x \in X$, $\nu(x) = 0$. The accepting states of $\ts{\aut}$
are all states of the form $\lpair{l}{\nu}$, where $l \in F$.
A \emph{run} of $\aut$ over a timed word $w = (a_0, t_0) (a_1, t_1) \ldots (a_k, t_k)$
is the sequence: $q_0 \transitionComb{t_0}{a_0} q_1 \transitionComb{t_1 -
t_0}{a_1} q_2 \ldots q_{k-1} \transitionComb{t_k - t_{k-1}}{a_k} q_{k+1}$, where $q_0$ is the initial state of $\ts{\aut}$.
The run is accepting if $q_{k+1}$ is an accepting state. The set of accepting
runs of $\aut$ is denoted $\traces{\aut}$. We say a timed word $w$ is accepted
by $\aut$ if there is a run in $\traces{\aut}$ whose projection to $\Sigma
\times \Realsp$ is $w$. 




The \emph{emptiness problem} for timed automata is as follows: given a timed
automaton $\cal A$, decide if $\traces{\cal A}$ is nonempty.  The emptiness
problem is also referred to as the \emph{reachability} problem as it is
equivalent to reachability of an accepting state in  $\ts{\aut}$.


\subsection{Weighted timed automata}

A \emph{weighted timed automaton} (\wta) is a timed automaton augmented by a function $C : L \cup \delta \mapsto \Rat$ that 
associates {\em weights} with the locations and switches of the timed automaton. 
The \emph{value} of a run $(l_0, \nu_0) \transition{\tau_0} (l_0, \nu_1) \transition{a_0} (l_1, \nu_2) \ldots \transition{a_{k}} (l_{k}, \nu_{2k+2})$ is given by 
$$ \sum_{i=0}^{k} C(l_{i}) \tau_{i} + \sum_{i=0}^k C(e_i)$$
where $e_i$ is the switch taken in the transition $(l_{i}, \nu_{2i+1}) \transition{a_i} (l_{i+1}, \nu_{2i+2})$. 
The value of a timed word $\Ts$ assigned by a \wta~$\aut$, denoted 
$\lang_\aut(\Ts)$, is defined as the infimum over values of all accepting runs of
$\aut$ on $\Ts$.  

The \emph{quantitative emptiness} problem for \wta~is as follows:
given a \wta~$\aut$ and $\lambda \in \Q$, decide if 
$\aut$ has an accepting run with value smaller than $\lambda$.

\begin{theorem}\cite{optimalPathsPspace}
The quantitative emptiness problem for \wta~is $\pspace$-complete.
\label{t:optTimedPaths}
\end{theorem}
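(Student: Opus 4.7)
The plan is to prove both directions of \pspace-completeness. For the lower bound, I would reduce from the (qualitative) emptiness problem for timed automata, which is known to be \pspace-complete. Given a timed automaton $\aut$, construct the \wta~$\aut'$ obtained by assigning weight $0$ to every location and every switch. Then $\aut$ has a nonempty language iff $\aut'$ has an accepting run of value strictly smaller than $1$. Since the construction is in logarithmic space, \pspace-hardness follows immediately.

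For the upper bound, I would follow the standard region-plus-corner-point approach. First, build the region equivalence over clock valuations of $\aut$: the set of regions is exponential in the encoding of $\aut$, but each region admits a polynomial-size representation (the integer parts of clocks capped at the maximal constant, together with the ordering of their fractional parts). The region automaton abstracts away continuous time, but by itself does not suffice for \wta~since the accumulated weight inside a location depends on the chosen delay. The fix is to pass to a \emph{corner-point abstraction} in the style of Bouyer et al.: states are pairs $(r, v)$ where $r$ is a region and $v$ is one of the finitely many corner valuations of $r$, and transitions are weighted by the contribution of a delay from one corner to another (possibly crossing to an adjacent region) plus the switch weight. A central technical lemma I would invoke/prove is that the infimum of values over accepting runs of $\aut$ equals the minimum weight of an accepting path in this corner-point graph; intuitively, the value function is piecewise-affine in the delays chosen within each region, so an optimum is always attained at a vertex of the corresponding polytope.

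Having reduced the problem to computing whether the corner-point graph has an accepting path of weight less than $\lambda$, I would solve this by a nondeterministic polynomial-space procedure: guess the path one $(r,v)$-state at a time, maintain only the current state and the accumulated weight, and accept when an accepting state is reached with weight smaller than $\lambda$. Each state has polynomial size; the weights of individual transitions have polynomial bit-length; a short-path argument on the corner-point graph (cycles of nonnegative weight may be skipped, while a path reaching a value below $\lambda$ need only be as long as the number of states) bounds the number of steps by an exponential in the input, so the running counters stay polynomial in bit-length. By Savitch's theorem, \np\pspace~collapses to \pspace, yielding the desired upper bound.

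The main obstacle is the corner-point lemma: one must show that for every accepting run of the \wta, there is an accepting corner-point path whose weight is no larger, and conversely that the corner-point weights are actually realized (or approached) by genuine timed runs. Delicate points include handling strict vs.\ non-strict clock constraints (so that the infimum may not be attained but can be approximated to arbitrary precision, which still decides the strict inequality with $\lambda$) and ensuring that delays chosen at corners remain consistent across clocks sharing a region. Once this geometric lemma is in place, the complexity upper bound follows by routine on-the-fly graph exploration, matching the lower bound and giving \pspace-completeness.
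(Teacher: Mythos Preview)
The paper does not prove this theorem at all: it is quoted from the literature (the citation \texttt{optimalPathsPspace}) and used as a black box in the rest of the development. So there is no ``paper's own proof'' to compare against; your sketch is, in effect, a reconstruction of the cited result, and it follows the standard route found there---\pspace-hardness by the trivial reduction from (unweighted) timed-automaton emptiness, and membership via the corner-point abstraction explored on the fly in nondeterministic polynomial space.

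Two minor points about your sketch that are worth tightening if you actually write it out. First, the weight function here ranges over $\Q$ and the timed-automaton model in this paper has no location invariants; hence an unbounded region with negative location weight gives infimum $-\infty$, and more generally your short-path bound only applies after you have detected and disposed of reachable negative-weight cycles in the corner-point graph (if one lies on an initial-to-accepting path, answer ``yes'' immediately). Second, the subtlety you raise about strict guards and unattained infima is precisely why the problem is stated with a strict threshold ``value $<\lambda$'': corner-point values are limits of values of genuine runs, so the strict comparison is decided correctly even when no run realises the infimum. With these caveats, your plan is correct and matches the argument the paper is citing.
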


A \wta~$\aut$ is \emph{functional} if for every timed word $\Ts$, all accepting runs of $\aut$ on $\Ts$ have the same value.

\subsection{Discrete transducers}

\newcommand{\run}{\gamma}

\noindent {\bf Discrete (finite-state) transducers.}\label{subsec:fst} A finite-state
transducer (\fst) $\tran$ is a tuple $(\alphI,\alphO,Q,Q_0,E,F)$ where
$\alphI$ is the input alphabet, $\alphO$ is the output alphabet, $Q$ is a
finite nonempty set of states, $Q_0 \subseteq Q$ is a set of initial states, $E
\subseteq Q \times \alphI \times \alphO^* \times Q$ is a set of
transitions, and $F$ is a set of accepting states.

\noindent {\bf Semantics of discrete transducers.}
A run $\run$ of $\tran$ on an input word $\s = \s[1]\s[2]\ldots \s[n]$ is defined in
terms of the sequence: $(q_0,\out_1)$, $(q_1,\out_2)$, $\ldots (q_1,\out_2)$ where $q_0 \in
Q_0$ and for each $i \in \{1,2,\ldots\}$, $(q_{i-1},\s[i],\out_i,q_{i}) \in E$. 
A run $(q_0,\out_1)$, $\ldots$ $(q_{n-1},\out_n)$, $(q_n,\es)$ 
is \emph{accepting} if $q_n \in F$. The output of $\tran$ along a run is the word 
$\out_1 \cdot \out_2 \cdot \ldots$ 
if the run is accepting, and is undefined otherwise. 
The transduction computed by an \fst $\tran$ 
is the relation $\funcDefinedBy{\tran} \subseteq \alphI^\omega 
\times \alphO^\omega$ (resp., $\funcDefinedBy{\tran} \subseteq \alphI^*  
\times \alphO^*$), where $(\s,\s') \in \funcDefinedBy{\tran}$ iff there is an
accepting run of $\tran$ on $\s$ with $\s'$ as the output along that run. 

\noindent {\bf Types of discrete transducers.} 
An \fst $\tran$ is called {\em functional} if the
relation $\funcDefinedBy{\tran}$ is a function. In this case, we use
$\funcDefinedBy{\tran}(\s)$ to denote the unique output word generated  along
any accepting run of $\tran$ on input word $\s$.  
An \fst is a letter-to-letter transducer if in every transition 
$(q,a,\out,a')$ we have $|\out| = 1$.

%

\section{Models of Timed I/O Systems}
\label{s:formalModels}
In this section, we present two models of timed I/O systems whose robustness will be studied in the following sections. 

\subsection{Timed transducers}

In this section, we define timed transducers, which extend 
classical discrete transducers.  


\begin{definition}[Timed transducer.] A \emph{timed transducer} $\trans$ is a timed automaton 
over an alphabet partitioned into an {\em input alphabet} $\alphI$ and an {\em output alphabet} 
$\alphO$.
\end{definition}

\noindent{\bf Semantics of timed transducers}. 
Given a timed transducer $\trans$, we define a relation $\funcDefinedBy{\trans} \subseteq \timedL{\alphI} \times \timedL{\alphO}$ 
by $\funcDefinedBy{\trans} = \{ (\Ts,\Tstr) : \trans$ accepts $\Ts \oplus   \Tstr \}$. 
We say that $\Tstr \in \timedL{\alphO}$
is an \emph{output} of $\trans$ on $\Ts \in \timedL{\alphI}$ if $(\Ts,\Tstr)  \in  \funcDefinedBy{\trans}$.

The following proposition we study the discrete parts of relations defined by timed transducers. 
We show that by imposing an additional assumption on transducers, namely that they do not have cycles labeled by $\alphO$,
we obtained the model  that  defines
relations on timed words such that their untimed parts can be defined by discrete transducers.
More formally, for a timed relation $R \subseteq \timedL{\alphI} \times \timedL{\alphO}$, 
we define $\untimed{R} \subseteq \alphI^* \times \alphO^*$ as
follows: for all $\s \in \alphI^*, \str \in \alphO^*$,
we have $(\s, \str) \in \untimed{R}$
 iff there exist $\Ts \in \timedL{\alphI}, \Tstr \in \timedL{\alphO}$ such that $(\Ts,\Tstr) \in R$, 
$\s = \untimed{\Ts}$ and $\str = \untimed{\Tstr}$. 

\begin{restatable}{proposition}{TimedExtensDiscrete}
\label{prop:TimedExtensDiscrete}
(i):~For every timed transducer $\trans$ that has no cycles labeled by $\alphO$, there exists a (nondeterministic) discrete transducer $\trans^d$
of exponential size in $|\trans|$ such that $\untimed{\funcDefinedBy{\trans}}$ and $\funcDefinedBy{\trans^d}$ 
coincide.
(ii):~For every discrete transducer $\trans^d$, there exists a timed transducer $\trans$ that has no cycles labeled by $\alphO$
such that $\untimed{\funcDefinedBy{\trans}}$ and $\funcDefinedBy{\trans^d}$ coincide.
\end{restatable}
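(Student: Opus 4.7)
My plan for \textbf{Part (ii)} is a direct simulation. Given $\trans^d = (\alphI, \alphO, Q^d, Q_0^d, E^d, F^d)$, I build $\trans$ as a timed automaton over $\alphI \cup \alphO$ with no clocks, so that every clock constraint is trivially satisfied. For each discrete transition $(q, a, b_1 \cdots b_k, q') \in E^d$ I introduce a fresh chain of intermediate locations $r_1, \ldots, r_k$ with switches $q \to r_1$ on letter $a$ and $r_i \to r_{i+1}$ on letter $b_i$ (setting $r_{k+1} = q'$). The initial and accepting locations of $\trans$ are exactly those of $\trans^d$. Because each discrete transition contributes only an acyclic output chain, no cycle of $\trans$ is labeled purely by $\alphO$. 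With no clocks, every untimed pair accepted by $\trans^d$ lifts to a concrete timed word by choosing any weakly increasing timestamps, and conversely every accepting timed run of $\trans$ projects to an accepting discrete run of $\trans^d$.

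\textbf{Part (i)} is more delicate. I base $\trans^d$ on the region automaton $R(\trans) = (\Sigma, L_R, l_0^R, \delta_R, F_R)$ of $\trans$, which has $|L_R| = 2^{O(|\trans|)}$ locations and the standard property that a label sequence $\sigma \in \Sigma^*$ is the untimed projection of some accepting run of $\trans$ iff $\sigma$ labels an accepting run of $R(\trans)$ viewed as a plain NFA over $\Sigma$. A first observation is that $R(\trans)$ inherits the absence of cycles labeled purely by $\alphO$: projecting any such cycle in $R(\trans)$ to the location component produces an $\alphO$-cycle in $\trans$. Hence the $\alphO$-subgraph of $R(\trans)$ is a DAG, and between two consecutive input events at most $|L_R|$ output letters can be emitted in any run of $\trans$.

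I then set $Q^d = L_R \cup \{q_f^*\}$ with a fresh sink $q_f^*$ as the unique accepting state and $l_0^R$ as the unique initial state. Each transition of $\trans^d$ records a single ``block'' of $R(\trans)$-edges consisting of exactly one $\alphI$-edge surrounded by $\alphO$-edges: (a)~for $p \in L_R \setminus \{l_0^R\}$ and $p' \in L_R$, include $(p, a, w, p') \in E^d$ iff $R(\trans)$ has a path $p \to p'$ labeled $a \cdot v$ with $w = v \in \alphO^*$; (b)~initial transitions $(l_0^R, a, w, p')$ additionally absorb an $\alphO$-prefix, so they exist iff $R(\trans)$ has a path $l_0^R \to p'$ labeled $u \cdot a \cdot v$ with $u, v \in \alphO^*$ and $w = u \cdot v$; (c)~closing transitions $(p, a, w, q_f^*)$ are defined analogously but additionally require the underlying $R(\trans)$-path to end in some $p_F \in F_R$, with $w$ absorbing the entire $\alphO$-suffix. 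Correctness then rests on partitioning any accepting timed run of $\trans$ on $\Ts \oplus \Tstr$ into $n = |\untimed{\Ts}|$ blocks delimited by its $n$ input events (pre-first-input outputs folded into the first block, post-last-input outputs into the closing transition), and conversely lifting each accepting discrete run of $\trans^d$ back to a timed run of $\trans$ by the standard completeness of the region construction. The exponential bound is immediate from $|L_R| = 2^{O(|\trans|)}$.

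\textbf{Main obstacle.} The chief subtlety in part (i) is the treatment of boundary output letters---those emitted before the first input event or after the last---since a discrete transducer must consume exactly one input letter per transition. Folding these letters into the first and closing transitions respectively is what simultaneously preserves correctness and keeps the size of $\trans^d$ exponential. The no-$\alphO$-cycle hypothesis is essential here, for without it a single inter-input block could contain arbitrarily many output letters, precluding any reduction to a finite-state discrete transducer.
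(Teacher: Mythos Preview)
Your approach is essentially the same as the paper's: for (ii) you both build a clock-free timed automaton by unrolling each discrete transition into a chain of fresh intermediate locations, and for (i) you both pass to the (exponential-size) untimed/region automaton and then collapse each input-letter-plus-output-block segment into a single discrete transition. The paper's version of (i) differs only in cosmetic details: it keeps the region automaton's own accepting states rather than introducing a fresh sink $q_f^*$, and it insists that each output block be \emph{maximal} (the $\alphO$-path ``cannot be extended''), whereas you allow arbitrary $\alphO^*$-suffixes and rely on nondeterminism. Both choices are fine.

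You are in fact more careful than the paper on the point you flag as the main obstacle: outputs emitted \emph{before} the first input letter. The paper's transitions are all of the shape $a\cdot w$ with $a\in\alphI$, so its construction silently drops any $\alphO$-prefix of an accepting run; your type-(b) initial transitions handle this correctly. One small thing to tidy up on your side: when $|\untimed{\Ts}|=1$, the single block must simultaneously absorb the $\alphO$-prefix \emph{and} terminate in $F_R$; as written, your type-(b) transitions land in $L_R$ and your type-(c) transitions are only said to be ``analogous'' without specifying whether they also admit a prefix when sourced at $l_0^R$. Make explicit that there is a combined initial--closing transition $(l_0^R,a,u\cdot v,q_f^*)$ whenever $R(\trans)$ has an accepting path labeled $u\cdot a\cdot v$ with $u,v\in\alphO^*$.
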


\noindent{\bf Functionality}.
A transducer is \emph{timed-functional} iff $\funcDefinedBy{\trans}$ is a function, i.e.,
for all $\Ts \in \timedL{\alphI}$ and $\Tstr_1, \Tstr_2 \in \timedL{\alphO}$, if both
$(\Ts, \Tstr_1) \in \funcDefinedBy{\trans}$ and
$(\Ts, \Tstr_2) \in \funcDefinedBy{\trans}$, then $\Tstr_1 = \Tstr_2$.
For a timed-functional transducer $\trans$, we use $\funcDefinedBy{\trans}(\Ts)$ to denote 
the unique output of $\trans$ on $\Ts$. 

\begin{restatable}{proposition}{FunctionalityPspaceComplete}
Deciding timed functionality of a timed transducer is $\pspace$-complete.
\end{restatable}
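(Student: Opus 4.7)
The plan is to establish matching upper and lower bounds, both \pspace, for deciding timed functionality.

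For the upper bound, I will decide non-functionality using a product construction. I build a timed automaton $\aut^2$ with location set $L \times L$ (where $L$ is the location set of $\trans$) and two disjoint copies $X_1,X_2$ of the clocks of $\trans$; intuitively, $\aut^2$ simulates two runs of $\trans$ on a common input $\Ts$ but with possibly different outputs $\Tstr_1,\Tstr_2$. $\aut^2$ reads a timed word over the tagged alphabet $\alphI \cup (\alphO \times \{1,2\})$: each input event $a\in\alphI$ triggers a synchronized step in both copies; a tag-$1$ output event advances only the first copy; a tag-$2$ output event only the second. To certify $\Tstr_1 \neq \Tstr_2$, I equip $\aut^2$ with a Boolean divergence flag and an auxiliary clock $y$: while the flag is $0$, tag-$1$ and tag-$2$ output events must appear in matched pairs (same letter, coincident timestamp enforced by $y=0$), so that the output prefixes stay equal; flipping the flag commits to a witnessed violation --- either a mismatched pair, or one side continuing after the other halts. $\aut^2$ accepts when both copies are in accepting locations and the flag has been flipped. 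The construction has polynomially many locations and $2|X|+O(1)$ clocks, so non-emptiness of $\aut^2$ reduces to reachability in a region graph of exponential size, solvable in \pspace\ by on-the-fly exploration.

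For the lower bound, I reduce from non-emptiness of timed automata, which is \pspace-complete. Given a timed automaton $\aut$ over alphabet $\Sigma$, I construct a timed transducer $\trans_\aut$ with input alphabet $\Sigma$ and output alphabet $\{b_1,b_2\}$ that on input $\Ts$ simulates $\aut$ and, upon reaching an accepting location, nondeterministically emits a single output event, either $b_1$ or $b_2$. Then $\trans_\aut$ is non-functional iff some timed word is accepted by $\aut$, so non-functionality is \pspace-hard. Since \pspace\ is closed under complement, deciding functionality is \pspace-hard as well.

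The main obstacle lies in the correctness of the divergence gadget of the upper bound: the disjoint-union semantics allows events at equal timestamps to be interleaved in multiple orders, so the pairing mechanism must match a tag-$1$ output event with a tag-$2$ output event at the same time without constraining their relative order in the merged word. Resetting $y$ at every output event in the non-divergent mode and insisting $y=0$ at the subsequent output event, together with a small finite memory of the last-seen tag and letter, forces the tag-$1$ and tag-$2$ streams to agree pointwise up to the flip, after which the two simulations are free to diverge independently toward acceptance.
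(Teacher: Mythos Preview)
Your proposal is correct and follows essentially the same strategy as the paper: reduce non-functionality to emptiness of a product timed automaton that runs two copies of $\trans$ on a shared input while a separate component detects an output discrepancy, and reduce from timed-automaton emptiness for hardness. The paper packages the discrepancy check as a black-box component $\aut^{\neq}$ rather than your inline divergence-flag gadget, and its hardness construction differs cosmetically (it combines an identity-on-$\lang(\aut)$ transducer with a constant-$\bot$ transducer instead of your nondeterministic $b_1/b_2$ emission), but the underlying ideas coincide.
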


Observe that a timed transducer does not have to be timed-functional, 
even if it is deterministic when viewed as a timed automaton.
Indeed, a trivial timed automaton that accepts every word over the alphabet $\alphI \cup \alphO$ 
is a deterministic and it is a timed transducer.
However, it is not functional.

In \ref{p:rigid-for-functional}, we present a sufficient condition for timed-functionality 
which can be checked in polynomial time. 
We further identify a class of transducers for which this condition is also necessary. 
A switch in a timed automaton is \emph{rigid} iff it is guarded by a constraint 
containing equality. A location $l$ in a timed automaton is unambiguous if
all constraints of any two outgoing switches from $l$ are strongly inconsistent, i.e.,
for all $x_1, \ldots, x_n, t$ the formula $g_1(x_1, \ldots, x_n) \wedge g_2(x_1 + t, \ldots, x_n+t)$ is does not hold.
A transducer is \emph{safe} if every location with outgoing $\alphI$ switches
is accepting.

\begin{restatable}{proposition}{SyntacticFunctionality}
(1)~A deterministic timed transducer in which all switches labeled by $\alphO$
are (a)~rigid, and (b)~all locations are with outgoing switches labeled with $\alphO$ are unambiguous,
 is functional.
(2)~Every function defined by a deterministic safe timed transducer is
also defined by a deterministic safe timed transducer satisfying (a) and (b) from (1).
\label{p:rigid-for-functional}
\end{restatable}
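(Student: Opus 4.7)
For part (1), I plan to argue by contradiction: fix an input timed word $\Ts$ and suppose two accepting runs $\gamma_1, \gamma_2$ of $\trans$ on $\Ts$ produce outputs $\Tstr_1 \neq \Tstr_2$. Since the underlying timed automaton is deterministic, the switches labelled with letters of $\Ts$ are forced by the events of $\Ts$ themselves, so the earliest divergence between $\gamma_1$ and $\gamma_2$ must occur at an output switch. Let $(l,\nu)$ denote the common state immediately before this divergence. By assumption (a), every outgoing $\alphO$-switch from $l$ has a guard containing an equality $x = c$, which together with $\nu$ determines a unique delay after which that switch may fire. By assumption (b), the guards $g_1, g_2$ of any two outgoing $\alphO$-switches from $l$ are strongly inconsistent, so there is no pair of delays $t_1, t_2$ for which $g_1(\nu + t_1)$ and $g_2(\nu + t_2)$ simultaneously hold. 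Together (a) and (b) force a unique next output switch and a unique firing time out of $(l,\nu)$, contradicting the divergence; hence $\Tstr_1 = \Tstr_2$ and $\trans$ is timed-functional.

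For part (2), let $\trans$ be deterministic, safe, and timed-functional. The plan is to build $\trans'$ by a region-based canonicalization of $\trans$ augmented with a fresh clock $y$ reset on every switch, so that $y$ measures the time spent in the current location. The locations of $\trans'$ are pairs $(l, R)$ where $l$ is a location of $\trans$ and $R$ is a clock region over the clocks of $\trans$ together with $y$; input switches of $\trans$ are lifted unchanged (refined by target region), which preserves both determinism and safety since the accepting locations of $\trans'$ are precisely those $(l, R)$ with $l$ accepting in $\trans$. The delicate step is the treatment of output switches. The key claim is that for each reachable $(l, R)$ with outgoing output switches, timed-functionality combined with safety forces a unique next output switch $s^*$ and a unique delay $\delta$ at which $s^*$ must fire, depending only on $(l, R)$. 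Given this claim, I would install in $\trans'$ a single output switch out of $(l, R)$ guarded by $y = \delta$ conjoined with the region constraints, and discard all other output switches from $(l, R)$; the new switch is rigid by construction, and singleness makes $(l,R)$ unambiguous.

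The key claim rests on the observation that, by safety, one may truncate an accepting run at any location with outgoing input switches, so the output events produced in an open interval between two consecutive input events depend only on the state at the start of the interval and the length of the interval. Timed-functionality then pins down the unique choice of next output from each $(l, R)$. Correctness of the construction amounts to exhibiting a bisimulation between accepting runs of $\trans$ and $\trans'$ that matches $(l, \nu)$ in $\trans$ with $(l, R)$ in $\trans'$ whenever $\nu \in R$. The main obstacle I anticipate is proving that deleting the non-canonical output switches does not shrink the domain: one must show that every input on which $\trans$ has an accepting run also admits one in $\trans'$. This is precisely where safety is essential, since it guarantees that the canonical choice of next output remains consistent with every accepting extension of the current prefix, rather than with only some of them.
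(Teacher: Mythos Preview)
Your argument for part~(1) is correct and essentially coincides with the paper's: the paper phrases it as an induction on the common prefix of two accepted words with the same $\alphI$-projection, you phrase it as locating the first divergence, but the core step---unambiguity picks the switch, rigidity pins the firing time---is identical.

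Part~(2) has a genuine gap. Your key claim asserts that from each reachable region state $(l,R)$ with outgoing output switches there is a unique delay $\delta$, \emph{depending only on $(l,R)$}, at which the output fires. This is false: the delay depends on the actual valuation $\nu \in R$, not merely on the region. Take a transducer with one clock $x$, an input switch into $l$ with guard $0 < x < 1$, and a single output switch out of $l$ with guard $x = 1$. Entering $l$ at valuation $\nu(x) = t$ (any $t \in (0,1)$) lands you in the same region, yet the output fires after delay $1 - t$, which varies with $t$. Adding a fresh clock $y$ reset on entry does not help: at firing time $y = 1 - t$, again non-constant. Hence a guard of the form $y = \delta$ with $\delta$ a rational constant cannot capture the firing time, and your single installed switch is not a valid timed-automaton switch.

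The paper avoids this by arguing directly about the \emph{original} clocks: it shows that in every accepting run, whenever an $\alphO$-switch $(l,a,g,X,l')$ is taken, some clock value must coincide with a constant appearing in $g$ (otherwise one could perturb the dwell time slightly, and safety would yield two accepting runs with the same input projection but different outputs, contradicting functionality). This lets each output switch be replaced by finitely many rigid copies, each guarded by an equality on an existing clock. Separately the paper splits locations so that outgoing switches are either all $\alphI$ or all $\alphO$, and then refines guards via regions and quantifier elimination to make them pairwise strongly inconsistent. The essential idea you are missing is that rigidity must be expressed on an existing clock hitting a guard constant, not on a fresh elapsed-time clock; once you have that, the fresh clock $y$ becomes unnecessary and the region construction can be used only to enforce unambiguity, not rigidity.
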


\subsection{Asynchronous Sequential Circuits}


\begin{figure}
\begin{center}
\genericAscDiagram
\caption{A generic \asc.}
\label{fig:asc}
\end{center}
\end{figure}

The second model of timed I/O systems that we consider is an asynchronous
sequential circuit (\asc). A generic \asc~is shown in \figref{asc} and some example 
\asc's are shown in Fig. \ref{f:osc}.

An \asc~is an I/O system composed of a combinational circuit (\coc)
and {\em memory devices}, or {\em delay elements}.  A \coc~is simply a
Boolean logic circuit that computes Boolean functions of its inputs.  A
\coc~is {\em memoryless}: the values of the circuit's output variables at time instant $t$
are functions of the values of the circuit's input variables at the same time instant
$t$.  A delay element is always labeled with some $d > 0$. The output of a
$d$-delay element at time $t$ equals its input at time $t-d$. 
We consider delays that are natural numbers.

\asc's may contain cycles, or {\em feedback loops}. Each such cycle is required
to contain at least one delay element. Due to the presence of delay elements
and feedback loops, an \asc~has memory: the outputs of an \asc~at time instant
$t$ are in general functions of its inputs at time instant $t$ as well as at
time instants $t'<t$.  The inputs of the delay elements of an \asc~are called 
{\em excitation variables}. The outputs of the delay elements of an \asc~are called 
{\em secondary variables}. The relationships between input, output, excitation and secondary variables 
of an \asc~are graphically represented in \figref{asc} and formally defined below.

\begin{definition}
Let $\ckt$ be an \asc~with input variables $\Inp = \ltuple{i}{m}$, output variables $\Out = \ltuple{o}{n}$, 
excitation variables $\Z = \ltuple{z}{k}$, secondary variables $\Y = \ltuple{y}{k}$ and delay 
elements $\Delta = \ltuple{d}{k}$.
Let $i(t)$ and $\Inp(t)$ denote the values of input $i$ and all inputs $\Inp$ at time $t$, respectively.
One can similarly define $o(t)$, $\Y(t)$  etc. We have the following:
\begin{align*}
&\forall j\in[1,k]: y^j(t) = \begin{cases}
	            0 &\mbox{if } t = [0,d^j)\\
                    z^j(t-d^j) &\mbox{if } t \geq d^j\\
		    \end{cases}\\
&\forall j\in[1,k]: z^j(t) = f^j(x^1(t),\ldots,x^m(t),y^1(t),\ldots,y^k(t)) \\
&\forall j\in[1,n]: o^j(t) = g^j(x^1(t),\ldots,x^m(t),y^1(t),\ldots,y^k(t)). \\
\end{align*}
Here, $f^1,\ldots,f^k$ and $g^1,\ldots,g^n$ are Boolean functions.
The {\em input alphabet} of \asc~$\ckt$, denoted $\alphI$, is given by $\{0,1\}^m$. 
The {\em output alphabet} of $\ckt$, denoted $\alphO$, is given by $\{0,1\}^n$. 
The \asc~$\ckt$ defines a transduction $\funcDefinedBy{\ckt} \subseteq \timedL{\alphI} \times
\timedL{\alphO}$ such that $\funcDefinedBy{\ckt}$ is a total function. Thus, the domain of 
$\ckt$ is given by $\dom{\ckt} = \timedL{\alphI}$.  We use $\funcDefinedBy{\ckt}(\Ts)$ to denote 
the unique output of $\ckt$ on $\Ts$.

\end{definition}



\section{Problem Statement}
\label{s:problem}
\noindent{\bf Similarity functions}. 
In our work, we use similarity functions to measure the similarity between timed words. 
Let $S$ be a set of timed words and let $\Realsinf$ denote the set $\Reals \cup \{\infty\}$.
A similarity function $d:S \times S\to \Realsinf$ is a function with the properties: 
$\forall x,y \in S:$ (1) $d(x,y) \ge 0$ and (2) $d(x,y) = d(y,x)$. 
A similarity function $d$ is also a distance (function or metric) if it satisfies the additional
properties: $\forall x,y,z \in S:$ (3) $d(x,y) = 0$ iff $x = y$ and (4)  $d(x,z)
\le d(x,y) + d(y,z)$. 
We emphasize that in our work we do not need to restrict similarity functions to be distances.

In this paper, we are interested in studying the {\em $K$-Lipschitz robustness} of
timed-functional transducers and \ascs. 

\begin{definition}[$K$-Lipschitz Robustness of Timed I/O Systems]
\label{def:robust}
Let $\tran$ be a timed-functional transducer or an \asc~with 
$\funcDefinedBy{\tran} \subseteq \timedL{\alphI} \times \timedL{\alphO}$.
Given a constant $K \in \Rat$ with $K > 0$ and similarity functions
$\dI: \timedL{\alphI} \times \timedL{\alphI} \; \to \; \Realsinf$
and $\dO: \timedL{\alphO} \times \timedL{\alphO} \; \to \; \Realsinf$,
the timed I/O system $\tran$ is called $K$-Lipschitz robust w.r.t. $\dI$, $\dO$ if:
\[\forall \Ts,\Tstr \in \dom{\tran}: \ \dI(\Ts,\Tstr) < \infty \, \Rightarrow \, 
\dO(\funcDefinedBy{\tran}(\Ts),\funcDefinedBy{\tran}(\Tstr)) \le K \dI(\Ts,\Tstr).
\]
\end{definition}

\section{Similarity Functions between Timed Words}
\label{s:formalDistances}
\noindent{\bf Timed words as Càdlàg functions}.  Consider a timed word $\Ts:
(a_0, t_0)(a_1, t_1) \ldots (a_k,t_k)$ over $(\Sigma, \timedom)$, where
$\timedom = [t_0, t_k]$ is an interval in $\Realsp$.  We define a Càdlàg
function $\ftime{\Ts}: \timedom \mapsto \Sigma$ as follows: for each $j \in
\{0,1,\ldots, k-1\}$, $\ftime{\Ts}(t) = a_j$ if $t \in [t_j,t_{j+1})$, and
$\ftime{\Ts}(t_k) = a_k$.  We define a timed word $\timed{\ftime{\Ts}} =
(\alpha_0,\tim_0)(\alpha_1,\tim_1)\ldots(\alpha_n,\tim_n)$ corresponding to the
Càdlàg function $\ftime{\Ts}$ such that:  for each $j \in \{0,1,\ldots,n\}$,
$\alpha_j = \ftime{\Ts}(\tim_j)$ and $\tim_j \in \{\tim_0,\ldots,\tim_n\}$ iff
$\ftime{\Ts}$ \emph{changes value} at $\tim_j$.  The timed word
$\timed{\ftime{\Ts}}$ can be interpreted as a {\em stuttering-free} version of
the timed word $\Ts$.


\noindent{\em Example}. Let $\Ts$ be the timed word
$(a,0)(b,1.3)(a,2)(a,2.9)(c,3.7)(a,5)$. Then $\ftime{\Ts}$ is given by the
following Càdlàg function over the interval $[0,5]$.

\begin{center}
\cadlagGraph
\end{center}
\noindent The timed word $\timed{\ftime{\Ts}}$ = $(a,0)(b,1.3)(a,2)(c,3.7)(a,5)$. 


\vspace{0.1in}

In what follows, let $\Ts$, $\Tstr$ be timed words over $(\Sigma, \timedom)$
with $\timedom \subseteq \mathbb{R}^+$. And let $\ftime{\Ts}$, $\ftime{\Tstr}$
be Càdlàg functions over $\timedom$ as defined above.  We present several
similarity functions between timed words  below. As will be clear, the
similarity between two timed words is computed as the similarity between their
corresponding Càdlàg functions.  We first present a similarity function between
discrete words. 

\noindent {\em Generalized Manhattan distance}.  The \emph{generalized
Manhattan distance} over discrete words $\s, \str$ is defined as: $\hd(\s,\str)
= \sum_{i=1}^{max(|\s|,|\str|)} \mathtt{diff}(\s[i],\str[i])$.  where
$\mathtt{diff}$ is the mismatch penalty for substituting letters.  The mismatch
penalty is required to be a distance metric on the alphabet (extended with a
special end-of-string letter $\eos$ for finite words).  When
$\mathtt{diff}(a,b)$ is defined to be $1$ for all $a,b$ with $a \neq b$, and
$0$ otherwise, $\hd$ is called the \emph{Manhattan distance}.



\vspace{0.1in}

\begin{definition}[Timed Manhattan distance]

Given $\mathtt{diff}$ on $\Sigma$:
\[
\dtM(\Ts,\Tstr) = \int_\timedom \mathtt{diff}(\ftime{\Ts}(x),\ftime{\Tstr}(x)) dx.
\]
\end{definition} 

Thus, the timed Manhattan distance extends the generalized Manhattan distance
by accumulating  the pointwise distance, as defined by $\mathtt{diff}$, between
the Càdlàg functions corresponding to timed words. 


\begin{definition}[Accumulated delay distance]
Let $\timed{\ftime{\Ts}} = (\alpha_0,\tim_0)(\alpha_1,\tim_1)\ldots(\alpha_n,\tim_n)$ 
and $\timed{\ftime{\Tstr}} = (\beta_0,\tau_0)(\beta_1,\tau_1)\ldots(\beta_n,\tau_m)$. 
\[
\dad(\Ts,\Tstr) = 
\begin{cases}
\sum_j |\tim_j - \tau_j| &\text{ if } \untimed{\timed{\ftime{\Ts}}} = \untimed{\timed{\ftime{\Tstr}}}\\
\infty &\text{ otherwise}.
\end{cases}
\]
\end{definition}

The accumulated delay distance examines the timed words $\timed{\ftime{\Ts}}$
and $\timed{\ftime{\Tstr}}$. If the projections of these timed words on their
$\Sigma$ components are equal, then the distance $\dad(\Ts,\Tstr)$ equals the
sum of delays between the corresponding events; otherwise the distance is
infinite. 


\begin{definition}[Skorokhod distance w.r.t. timed Manhattan distance]
Let $\Wiggles$ be the set of all continuous bijections from the domain $I$ of 
$\ftime{\Ts}$ and $\ftime{\Tstr}$ onto itself.
\[
\dS(\ftime{\Ts},\ftime{\Tstr}) = 
\inf_{\wiggle \in \Wiggles} \left(\lone{\mathtt{Id} - \wiggle} + \dtM(\ftime{\Ts},\ftime{\Tstr} \circ \wiggle)\right),
\]
where $\mathtt{Id}$ is the identity function over $I$, $\lone{.}$ is the $L_1$-norm over $\Realsp$ 
and $\circ$ is the usual function composition operator. 
\end{definition}

The Skorokhod distance is a popular distance metric for continuous functions.
Hence, it is also a natural choice for our Càdlàg functions.  The Skorokhod
distance permits {\em wiggling} of the function values as well as the timeline 
in order to {\em match up} the functions.
The timeline wiggle is executed using continuous bijective functions, denoted 
$\wiggle$, over the timeline. 
The first component of the Skorokhod distance measures the magnitude of the 
{\em timing distortion} resulting from a timeline wiggle $\wiggle$. 
The second component of the Skorokhod distance measures the magnitude of the 
{\em function value mismatch} under $\wiggle$. 
The Skorokhod distance is the least value obtained over all such timeline wiggles. 
The magnitudes of the timing distortion and function value mismatch can be computed 
and combined in different ways. In our work, the timing distortion is computed 
as the $L_1$ norm, the function value mismatch is computed as the timed Manhattan distance 
and the two are combined using addition. 

We now present some helpful connections between the above distances.  

\begin{restatable}{proposition}{RelationBetweenDistances}[Relations between distances]
(i)~The accumulated delay distance coincides with the Skorokhod distance w.r.t. the timed Manhattan distance
defined by $\mathtt{diff}^{=}$ such that: 
$\forall a,b \in \alphI$, $\mathtt{diff}^{=}(a,b) = 0$ if $a = b$ and $\mathtt{diff}^{=}(a,b) = \infty$ otherwise. 
(ii)~For every timed Manhattan distance $\dtM^{\leq 1}$ defined 
$\mathtt{diff}^{\leq 1}$ such that $\forall a,b \in \alphI$, $\mathtt{diff}^{\leq 1}(a,b) \leq 1$, 
we have the Skorokhod distance w.r.t. $\dtM^{\leq 1}$ coincides with $\dtM^{\leq 1}$. 
\label{RelationBetweenDistances}
\end{restatable}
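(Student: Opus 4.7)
The plan is to treat the two parts of Proposition~\ref{RelationBetweenDistances} separately, both by a case split followed by matching upper and lower bounds.

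For part (i), the first step is to case-split on whether the stuttering-free projections $\untimed{\timed{\ftime{\Ts}}}$ and $\untimed{\timed{\ftime{\Tstr}}}$ coincide. If they differ, then $\dad(\Ts,\Tstr) = \infty$ by definition, and on the Skorokhod side I would argue that any $\wiggle \in \Wiggles$ is a continuous bijection of an interval and therefore strictly monotone, so $\ftime{\Tstr} \circ \wiggle$ has the same ordered sequence of values as $\ftime{\Tstr}$. This sequence therefore disagrees with $\ftime{\Ts}$ on a set of positive Lebesgue measure, and because $\mathtt{diff}^{=}$ assigns $\infty$ to every mismatched pair, $\dtM^{=}(\Ts, \Tstr \circ \wiggle) = \infty$ for every wiggle, forcing $\dS = \infty$. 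If the projections agree with common value sequence $\alpha_0, \ldots, \alpha_n$, then finiteness of $\dS$ requires $\dtM^{=}(\Ts, \Tstr \circ \wiggle) = 0$, i.e.\ $\ftime{\Ts}(x) = \ftime{\Tstr}(\wiggle(x))$ almost everywhere, and continuity of $\wiggle$ then pins down $\wiggle(\tim_j) = \tau_j$ at each change point. The remaining calculation is to show $\inf_{\wiggle : \wiggle(\tim_j) = \tau_j} \lone{\mathtt{Id} - \wiggle} = \sum_j |\tim_j - \tau_j|$: the upper bound through a family of local-correction wiggles that are the identity outside shrinking neighborhoods of each $\tim_j$, and the matching lower bound by localizing the constraint $\wiggle(\tim_j) = \tau_j$ to disjoint pieces whose contributions decouple.

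For part (ii), the plan is to prove matching inequalities. The upper bound $\dS(\Ts,\Tstr) \le \dtM^{\le 1}(\Ts,\Tstr)$ is immediate by taking $\wiggle = \mathtt{Id}$. For the lower bound, I would fix an arbitrary $\wiggle \in \Wiggles$, apply the triangle inequality of the underlying metric $\mathtt{diff}^{\le 1}$ pointwise, and integrate to obtain
\[
\dtM^{\le 1}(\Ts, \Tstr) \le \dtM^{\le 1}(\Ts, \Tstr \circ \wiggle) + \dtM^{\le 1}(\Tstr \circ \wiggle, \Tstr).
\]
It then suffices to bound $\dtM^{\le 1}(\Tstr \circ \wiggle, \Tstr) \le \lone{\mathtt{Id} - \wiggle}$. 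Here the key observation is that the integrand $\mathtt{diff}^{\le 1}(\ftime{\Tstr}(\wiggle(x)), \ftime{\Tstr}(x))$ is at most $1$ and vanishes unless a change point of $\ftime{\Tstr}$ sits between $x$ and $\wiggle(x)$; charging each such contribution to the displacement $|\wiggle(x)-x|$ yields the bound.

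The hardest part will be the infimum calculation in (i): the optimal wiggle lives on the boundary of admissibility, since the target cost $\sum_j |\tim_j - \tau_j|$ is only approached in the limit where $\wiggle$ sharpens into an instantaneous jump from $\tim_j$ to $\tau_j$. Exhibiting a controlled sequence of continuous bijections whose cost converges to $\sum_j |\tim_j - \tau_j|$, and simultaneously proving that no single bijection can undercut this sum by amortizing wiggle across different change points, is the technical crux of the whole proposition.
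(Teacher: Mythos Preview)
The paper does not include a proof of this proposition, so there is nothing to compare your approach against directly; but your argument contains a concrete gap that affects both parts. In part~(ii) the key step is the inequality $\dtM^{\leq 1}(\Tstr \circ \wiggle, \Tstr) \le \lone{\mathtt{Id} - \wiggle}$, and this fails under the integral interpretation of the $L_1$-norm. Take $I = [0,2]$, let $\ftime{\Tstr}$ equal $a$ on $[0,1.5)$ and $b$ on $[1.5,2]$ with $\mathtt{diff}^{\leq 1}(a,b)=1$, and let $\wiggle$ be any increasing continuous bijection of $[0,2]$ with $\wiggle(1)=1.5$. Then $\ftime{\Tstr}\circ\wiggle$ jumps at $x=1$, so $\dtM^{\leq 1}(\Tstr\circ\wiggle,\Tstr)=0.5$. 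But by taking $\wiggle$ equal to the identity on $[0,1-\epsilon]$, rising steeply to $1.5$ on $[1-\epsilon,1]$, creeping from $1.5$ to $1.5+\delta$ on $[1,1.5]$, and then returning to near-identity, the only unavoidable contribution to $\int_0^2 |x-\wiggle(x)|\,dx$ comes from the interval $[1,1.5]$, where $\wiggle(x)\ge 1.5$ forces $\int_1^{1.5}(1.5-x)\,dx = 0.125$. Hence $\lone{\mathtt{Id}-\wiggle}$ can be driven to $0.125 < 0.5$. Your ``charging'' heuristic breaks down because a steep ramp in $\wiggle$ concentrated near a single point contributes negligibly to the integral yet shifts the change point of $\ftime{\Tstr}\circ\wiggle$ by a fixed amount, creating a mismatch region of fixed positive measure.

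The same construction undermines the lower bound you need in part~(i): with $\ftime{\Ts}$ jumping at $\tim_1=1$ and $\ftime{\Tstr}$ at $\tau_1=1.5$, your plan requires $\inf_{\wiggle:\wiggle(1)=1.5}\lone{\mathtt{Id}-\wiggle} = 0.5$, but the infimum is $0.125$. Plugging this back in fact gives $\dS(\Ts,\Tstr)\le 0.125 < 0.5 = \dad(\Ts,\Tstr) = \dtM^{\leq 1}(\Ts,\Tstr)$, so under the integral reading of $\lone{\cdot}$ the proposition itself is in tension with this example. Before proceeding you should pin down what the paper intends by ``the $L_1$-norm over $\Realsp$''; the statement becomes consistent if, for instance, $\lone{\mathtt{Id}-\wiggle}$ is read as a sum of displacements over the change points of $\Tstr$ (which is how the paper uses it in the proof of Lemma~\ref{l:S}), but not under the integral interpretation your triangle-inequality argument implicitly assumes.
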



\section{Robustness Analysis of Timed Transducers}
\label{s:robustnessAutomata}

\noindent{\bf Timed-automatic similarity function.} 
A timed similarity function $\dist$ is \emph{computed} by a \wta~$\aut$ iff for all $\Ts, \Tstr \in \timedL{\Sigma}$,
$\dist(\Ts,\Tstr) = \lang_{\aut}(\Ts \oplus \Tstr)$.
A timed similarity function $\dist$ computed by a \wta~is called a \emph{timed-automatic similarity function}.

\noindent{\bf $N$-interleaved timed words.} 
Timed words $\Ts, \Tstr$ are defined to be {\em $N$-interleaved} iff 
in any time interval $[t_1,t_2]$, the numbers of events from $\Ts$ and from $\Tstr$ differ by at most $N$.
Intuitively, the $N$-interleaved property 
expresses that two words are synchronized~\cite{DBLP:conf/hybrid/HenzingerO14}.

\begin{definition}
A timed-functional transducer $\trans$ is called \emph{timed-synchronized} iff there exists $N$ 
such that for every $\Ts \in \timedL{\alphI}$, the words $\Ts$ and $\funcDefinedBy{\trans}(\Ts)$ are 
$N$-interleaved.
\end{definition}

\begin{restatable}{theorem}{AutomataBasedRobustness}
\label{t:automata}
Let $\dI$, $\dO$ be timed-automatic similarity functions such that
$\dI, \dO$ are computed by (nondeterministic) \wta. 
\begin{compactenum}[(i)]
\item There exists a sound procedure for checking 
$K$-robustness of a timed-synchronized transducer w.r.t. $\dI, \dO$ that works in polynomial space. 
\item If $\dO$ is computed by a functional \wta, checking 
$K$-robustness of a timed-synchronized transducer w.r.t. $\dI, \dO$ is \pspace-complete.
\end{compactenum}
\end{restatable}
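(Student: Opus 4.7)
The plan is to reduce $K$-robustness to the quantitative emptiness problem for weighted timed automata (Theorem~\ref{t:optTimedPaths}). Non-$K$-robustness of $\trans$ says that there exist $\Ts_1,\Ts_2\in\timedL{\alphI}$ with $\dI(\Ts_1,\Ts_2)<\infty$ and $\dO\bigl(\funcDefinedBy{\trans}(\Ts_1),\funcDefinedBy{\trans}(\Ts_2)\bigr) > K\,\dI(\Ts_1,\Ts_2)$. Using $\dI=\inf_{\rho_I}v_I$ and $\dO=\inf_{\rho_O}v_O$, this is equivalent, for fixed $(\Ts_1,\Ts_2)$, to $\exists\rho_I$ of $\aut_\dI$ on $\Ts_1\oplus\Ts_2$ such that $\forall\rho_O$ of $\aut_\dO$ on $\funcDefinedBy{\trans}(\Ts_1)\oplus\funcDefinedBy{\trans}(\Ts_2)$ we have $v_O>K\,v_I$. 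The quantifier alternation on $\rho_O$ is precisely the source of the gap between (i) and (ii): dropping it gives an existential condition computable by one WTA, and this is equivalent to non-robustness exactly when $\aut_\dO$ is functional.

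I will build a WTA $\mathcal{B}$ as a product of four components synchronised on the chronological merge of events from $\Ts_1,\Ts_2$ and the guessed outputs $\Tstr_1,\Tstr_2$: two copies of $\trans$ (accepting $\Ts_1\oplus\Tstr_1$ and $\Ts_2\oplus\Tstr_2$), the WTA $\aut_\dI$ on $\Ts_1\oplus\Ts_2$, and the WTA $\aut_\dO$ on $\Tstr_1\oplus\Tstr_2$. Locations of $\mathcal{B}$ are $4$-tuples of component locations, clocks are the disjoint union of the component clocks, and each discrete step is tagged with the subset of components it advances; $\mathcal{B}$ nondeterministically guesses $\Ts_2$ and both output streams letter-by-letter in time order. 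The timed-synchronised hypothesis (each $\Ts_i$ and $\Tstr_i$ are $N$-interleaved) bounds, at every point of every run, the imbalance between input and output events of either transducer instance by $N$, which is what makes the synchronous product well-defined and keeps $\mathcal{B}$ of polynomial size in $|\trans|,|\aut_\dI|,|\aut_\dO|$. I assign weights so that the value of a run of $\mathcal{B}$ equals $K\,v_I-v_O$, and restrict to accepting $\aut_\dI$-runs of finite weight to enforce the $\dI<\infty$ premise. By Theorem~\ref{t:optTimedPaths}, deciding in \pspace{} whether $\mathcal{B}$ has an accepting run of value strictly less than $0$ is exactly the existence of a witness $(\Ts_1,\Ts_2,\rho_I,\rho_O)$ with $v_O>K\,v_I$.

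Soundness for (i) then follows: if $\mathcal{B}$ has no such run, then for all $\Ts_1,\Ts_2$ and all accepting $(\rho_I,\rho_O)$ one has $v_O\le K\,v_I$; taking the infimum over $\rho_O$ gives $\dO\le K\,v_I$ for every $\rho_I$, and then taking the infimum over $\rho_I$ yields $\dO\le K\,\dI$. For (ii), functionality of $\aut_\dO$ makes $v_O=\dO$ unique per $(\Tstr_1,\Tstr_2)$, so the witness $v_O>K\,v_I$ directly forces $\dO>K\,\dI$; the procedure is both sound and complete, giving the \pspace{} upper bound. For \pspace-hardness I would reduce from reachability in timed automata (\pspace-complete): encode a timed automaton $\aut$ as a transducer with dummy output and choose trivially-computable (constant or projection) similarity functions so that non-$K$-robustness coincides with reachability of an accepting location in $\aut$. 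The main obstacle, I expect, is the product construction itself: faithfully simulating two transducer runs together with two similarity WTA over four mutually-synchronised timed streams while keeping both the discrete state space and the clock set polynomial; the $N$-interleaved property is essential here to bound the needed bookkeeping, and the soundness argument together with the hardness reduction should then fall out with routine adjustments.
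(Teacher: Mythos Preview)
Your approach is essentially the paper's own: a product WTA combining two copies of the transducer with the $K$-scaled input-distance WTA and the negated output-distance WTA, followed by quantitative emptiness against threshold $0$, with the soundness/completeness split hinging on functionality of $\aut_{\dO}$ exactly as you describe. One small correction: the timed-synchronised hypothesis is not what keeps the four-way product polynomial (that product over the merged alphabet $\alphI\oplus\alphI\oplus\alphO\oplus\alphO$ is polynomial regardless), and in fact the paper's written proof neither invokes the hypothesis explicitly in the construction nor supplies the \pspace-hardness argument you sketch.
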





In what follows, we define several timed similarity functions that can be
computed by functional and nondeterministic \wta.\\

\noindent {\bf Timed similarity functions computed by functional \wta}.
We show that the timed Manhattan and accumulated delay distances 
can be computed by functional \wta. 

\begin{restatable}{lemmaStatement}{TimedManhattanIsAutomatic}
\label{l:Man}
The timed Manhattan distance $\dtM$ over timed words 
is computed by a functional \wta.
\end{restatable}

To compute the timed Manhattan distance, the \wta~simply tracks the
$\mathtt{diff}$ between timed events using its weight function. The semantics
of \wta~then imply that the value assigned by the automaton
to a pair of timed words is precisely the timed Manhattan distance between
them. 

%

\begin{restatable}{lemmaStatement}{AccumulatedDelayIsAutomatic}
\label{l:Ad}
Let $\dur$, $\bound$ be any nonnegative real numbers. 
The accumulated delay distance $\dad$ over  
timed words $\Ts$, $\Tstr$ such that:
\begin{compactenum}
\item the duration of any segment in $\ftime{\Ts}$, $\ftime{\Tstr}$ is greater than $\dur$
and 
\item the delay $|\delta_j - \tau_i|$ between corresponding events in 
$\ftime{\Ts}$, $\ftime{\Tstr}$ is less than $\bound$, 
\end{compactenum}
is computed by a functional \wta.
\end{restatable}

The \wta~tracks with its weight function the number of unmatched events.
Again, the semantics of \wta~imply that the value assigned by the automaton
to a pair of timed words is precisely the accumulated delay distance.
To make sure that every event is matched to the right event, i.e. the untimed parts are equal, the automaton
implements a buffer to store the unmatched events.
The assumptions on the minimal duration of events and the maximal delay between the corresponding events 
imply that the buffer's size is bounded.\\

\noindent {\bf Timed similarity functions computed by nondeterministic \wta}.
A (restricted) Skorokhod distance can be computed by a nondeterministic \wta. 
We first prove the following lemma characterizing an essential 
subset of the set $\Wiggles$ of all timing distortions. 

\begin{restatable}{lemmaStatement}{SkorohodIsPiecewiseLinear}[Skorokhod distance is realized by a piecewise linear function]
\label{l:piece}
Let $\Ts$, $\Tstr$ be timed words. 
Let $\eta$ be the number of segments in $\Tstr$. 
For every $\epsilon > 0$, there exists a piecewise linear function $\wiggle$ consisting of 
$\eta$ segments such that 
$\lone{\mathtt{Id} - \wiggle} + \dtM(\ftime{\Ts},\ftime{\Tstr} \circ \wiggle)
- \dS(\ftime{\Ts},\ftime{\Tstr})| \leq \epsilon$.
\end{restatable}

Lemma~\ref{l:piece} implies that $\lone{\mathtt{Id}} - \wiggle$ coincides with the accumulated delay distance between
$\ftime{\Tstr}$ and $\ftime{\Tstr} \circ \wiggle$. This allows us to compute the Skorokhod distance by a \wta~for
$\wiggle$ for which there is a \wta~that can compute the accumulated delay between 
$\ftime{\Tstr}$ and $\ftime{\Tstr} \circ \wiggle$. 

\begin{restatable}{lemmaStatement}{SkorohodIsAutomatic}
\label{l:S}
Let $\dur$, $\bound$ be any nonnegative real numbers. 
The Skorokhod distance $\dS$ over  
timed words $\Ts$, $\Tstr$ restricted to time distortions $\wiggle$ such that:
\begin{compactenum}
\item the duration of any segment in $\ftime{\Tstr}$, $\ftime{\Tstr} \circ \wiggle$ is greater than $\dur$
and 
\item the delay $|\delta_j - \tau_i|$ between corresponding events in 
$\ftime{\Tstr}$, $\ftime{\Tstr} \circ \wiggle$ is less than $\bound$, 
\end{compactenum}
is computed by a nondeterministic \wta.
\end{restatable}


\begin{remark}
Physical systems typically 
have a bounded rate at which they can generate/process data. 
Hence, bounding the minimum possible duration of timed symbols 
is not a severe restriction from the modeling perspective.
Moreover, if an input is delayed arbitrarily, it 
makes little sense to constraint the system behavior. 
Hence, for robustness analysis, it is also reasonable to 
bound the maximum delay between corresponding events.  
\end{remark}

\noindent{\bf Summary of decidability results}. We summarize the decidability results 
for timed-synchronized transducers  
that follow from Theorem \ref{t:automata} and Lemmas \ref{l:Man}, \ref{l:Ad} and \ref{l:S}.  
\begin{compactenum}
\item $K$-robustness  
is \pspace-complete for timed Manhattan distances.
\item $K$-robustness  
is \pspace-complete for accumulated delay distances,  
under environment assumptions from Lemma \ref{l:Ad}.
\item $K$-robustness  
is \pspace-complete if the input perturbation is computed 
as a Skorokhod distance and the output perturbation is computed 
as a timed Manhattan distance. 
\end{compactenum}

\section{Robustness Analysis of Asynchronous Sequential Circuits}
\label{s:robustnessCircuits}
\setlength{\textfloatsep}{10pt}

In this section we show that robustness of \ascs~w.r.t. the timed Manhattan distances 
is \pspace-complete. The decision procedure is by reduction to discrete
letter-to-letter transducers. Our argument consists of two steps and relies 
on the use of steps functions --- Càdlàg functions that change values only at integer points.
First, we show that on inputs that are step functions,
$\ascs$~behave like discrete letter-to-letter transducers. Second, we show that 
if an \asc~is not $K$-robust w.r.t. the timed Manhattan distances, there exists 
a counterexample consisting of a pair of inputs that are step functions. 






\noindent{\bf \ascs~transforming step functions.}
There is a natural correspondence between step functions $f : [0, T] \mapsto \{0,1\}^k$
and words over the alphabet $\{0,1\}^k$. The function $f$ defines the word $w_f = f(0) f(1) \ldots f(T-1)$ and,
conversely, a word $w \in (\{0,1\}^k)^*$ defines a step function $f_w$ such that  
$w_{f_w} = w$. We aim to show that the behavior of \ascs~on step function $f$ is captured 
by discrete transducers on words $w_f$. 

First, observe that an \asc~with integer delays transforms step functions into step functions. 
Indeed, the output at time $t$ depends on the input and secondary variables at time $t$, which are 
equal to the values of  excitation variables at times $\ltuple{t-d}{k}$. The excitation variables at times
 $\ltuple{t-d}{k}$ depend on inputs and secondary variables at times $\ltuple{t-d}{k}$. As delays are integers, 
by unraveling the definition of the output variables (resp., excitation and secondary variables) at time $t$, we
obtain that they depend solely on (a subset of) inputs at times $frac(t), frac(t)+1, \ldots, t$, where $frac(t)$ is the fractional part of $t$.
Therefore, if an input is a step function, then excitation, secondary and output variables are all step functions. 
Moreover, the value of the step function output in the interval $[j, j+1)$ with $j \in \N$ can be computed using 
the input value in the interval $[j, j+1)$ and the values of excitation variables in the intervals $[j-d^1, j+1-d^1), \ldots [j-d^k, j+1-d^k)$.
Therefore, we can define a discrete letter-to-letter transducer that simulates the given \asc. Such a transducer 
remembers in its states values of the excitation variables in the last $\max(d^1, \ldots, d^k)$ 
intervals.

\begin{restatable}{lemmaStatement}{AscsAreDiscrete}
(1)~If the input to an \asc~is a step function, the output is a step function.
(2)~Given an \asc~$\ckt$, one can compute in polynomial space a discrete letter-to-letter transducer $\trans_{\ckt}$ such that
for every step function $f$, the output of $\ckt$ on $f$ is $f_v$, where $v$ is the output of $\trans$
on $w_f$.
\label{l:AscsAreDiscrete}
\end{restatable}
 
\begin{remark}
The transducer $\trans_{\ckt}$ in Lemma~\ref{l:AscsAreDiscrete} can be constructed in polynomial space, meaning that
its sets of states and accepting states are succinctly representable and we can decide in polynomial time whether a given 
tuple $(q,a,b,q')$ belongs to the transition relation of  $\trans_{\ckt}$.
\end{remark}

\noindent{\bf Counterexamples to $K$-robustness of~\ascs.}
Consider an \asc~with integer delays that is not $K$-robust w.r.t. $\dI, \dO$. 
Then, there are two input functions $f_1, f_2$ that witness non-$K$-robustness, i.e.,
$\dO(\funcDefinedBy{\ckt}(f_1),\funcDefinedBy{\ckt}(f_2)) > K \cdot \dI(f_1, f_2)$.
We show that for \ascs, if there exists a pair of functions that witnesses non-$K$-robustness, 
there exists a pair of step functions that witnesses non-$K$-robustness as well.
Recall that the output of the \asc~at time $t$ depends only on inputs at times $frac(t), frac(t)+1, \ldots, t$.
Hence, we argue that if the pair $f_1, f_2$ is a witness of non-$K$-robustness, then for some $x \in [0,1)$,
$f_1, f_2$ restricted to the domain  $\Delta_x = \{ y \in \dom{f_1} \cap \dom{f_2} \mid frac(y) = x \}$ is also a witness of non-$K$-robustness. 
Since the set $\Delta_x$ is discrete, we can define step functions based on $f_1, f_2$
restricted to $\Delta_x$.

\begin{restatable}{lemmaStatement}{stepFunctionsCounterexaples}
Let $\ckt$ be an \asc~with integer delay elements.
If $\ckt$ is not $K$-robust w.r.t. timed Manhattan distances $\dI, \dO$, then
there exists a pair of step functions $f_1, f_2$ such that 
$\dO(\funcDefinedBy{\ckt}(f_1),\funcDefinedBy{\ckt}(f_2)) > K \cdot \dI(f_1, f_2)$.
\label{l:stepFunctionsCounterexaples}
\end{restatable}

\noindent{\bf $K$-robustness of discrete transducers}.
We next present a decidability result that follows from \cite{DBLP:journals/corr/HenzingerOS14}.
Deciding $K$-robustness of letter-to-letter transducers w.r.t. generalized
Manhattan distances reduces to quantitative non-emptiness of weighted automata
with $\fsum$-value function~\cite{DBLP:journals/corr/HenzingerOS14}.  The
latter problem can be solved in nondeterministic logarithmic space, assuming
that the weights are represented by numbers of logarithmic length. Hence, we obtain the 
following result for {\em short} generalized Manhattan distances, i.e., distances 
whose $\mathtt{diff}$  values are represented by numbers of logarithmic length. 


\begin{lemma}
Deciding $K$-robustness of letter-to-letter transducers w.r.t. short generalized Manhattan distances 
is in \nlogspace.
\label{l:discretedDecidablity}
\end{lemma}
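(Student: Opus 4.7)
The plan is to follow the reduction suggested by the cited discrete-transducer paper, making the construction explicit enough for the logspace bound to be transparent.

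First, I would build a product $\fsum$-weighted automaton $\mathcal{B}$ that runs two copies of the given letter-to-letter transducer $\trans$ in parallel. Because $\trans$ is letter-to-letter, each of its transitions consumes exactly one input letter and produces exactly one output letter, so the two simulated runs are position-synchronized by construction. A transition of $\mathcal{B}$ corresponds to a pair of transitions of $\trans$ that reads a pair of input letters $(a_1, a_2)$ and produces a pair of output letters $(b_1, b_2)$; its weight is
\[
w \;=\; \diffO(b_1, b_2) \;-\; K \cdot \diffI(a_1, a_2).
\]
A run of $\mathcal{B}$ is accepting iff both simulated runs of $\trans$ reach accepting states. With the $\fsum$-value function, the value of such a run is exactly $\dO(\out_1, \out_2) - K \cdot \dI(\inp_1, \inp_2)$, where $\inp_1, \inp_2$ (resp., $\out_1, \out_2$) are the inputs (resp., outputs) of the two simulated runs. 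Consequently, $\trans$ is \emph{not} $K$-robust iff $\mathcal{B}$ admits an accepting run of strictly positive value, which is precisely quantitative non-emptiness of $\mathcal{B}$ at threshold $0$.

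Next, I would invoke the $\fsum$-weighted automaton non-emptiness algorithm from \cite{DBLP:journals/corr/HenzingerOS14}, which runs in \nlogspace\ provided that transition weights have logarithmic bit-length. The \emph{short} hypothesis on the generalized Manhattan distance supplies exactly this: each value of $\diffI$ and $\diffO$ is representable in logarithmic space, so the transition weight $w$ is as well. Since $\trans$ is given explicitly and $\mathcal{B}$ has a polynomial-size state space that can be generated on the fly, guessing a state of $\mathcal{B}$ requires only logarithmic space.

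The main obstacle, and the reason the short-weights assumption is really needed, is bounding the running sum along a guessed path so that it fits in logspace. The standard argument is that a positively-valued accepting run exists iff one is witnessed either by a simple path of polynomial length in $|\mathcal{B}|$, whose cumulative weight is polynomially bounded and therefore fits in logspace under the short hypothesis, or by an accepting run that traverses a reachable positive-weight cycle, which can be iterated to exceed any threshold. Both witnesses are detectable by standard \nlogspace\ reachability augmented with a logspace-sized sum counter. Packaging these three ingredients --- the product construction, the semantic reduction to $\fsum$ non-emptiness, and the logspace accounting enabled by short weights --- yields the claimed \nlogspace\ upper bound.
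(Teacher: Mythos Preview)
Your proposal is correct and follows essentially the same route the paper sketches: reduce $K$-robustness of a letter-to-letter transducer to quantitative non-emptiness of a $\fsum$-weighted automaton (via the two-copy product with edge weight $\diffO(b_1,b_2) - K\cdot\diffI(a_1,a_2)$), then invoke the \nlogspace\ bound for $\fsum$ non-emptiness under logarithmic-length weights from~\cite{DBLP:journals/corr/HenzingerOS14}. The paper itself gives no further detail beyond this reduction, so your additional paragraph on the simple-path/positive-cycle witness and the logspace accounting of the running sum is a welcome elaboration rather than a deviation.
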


We can now characterize the complexity of checking $K$-robustness of \ascs. 

\begin{restatable}{theorem}{AscRobustnessPspace}
Deciding $K$-robustness of \ascs~with respect to timed Manhattan distances is \pspace-complete.
\end{restatable}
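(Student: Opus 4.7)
The plan is to prove the two directions of \pspace-completeness separately. The \pspace~upper bound will follow by chaining the three preceding lemmas into an on-the-fly simulation, while \pspace-hardness will come from a reduction from a standard \pspace-hard problem that exploits the succinct (circuit-based) state representation of \ascs. The main obstacle I anticipate is the hardness reduction, since the \asc~must be engineered so that the ratio between output and input timed Manhattan distances exceeds $K$ exactly when the simulated \pspace-hard computation accepts; the upper bound, by contrast, is essentially routine bookkeeping.

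For the upper bound, I would first apply Lemma~\ref{l:stepFunctionsCounterexaples} to restrict attention to step-function witnesses of non-$K$-robustness. The key observation is then that for step functions $f, g : [0,T] \to \{0,1\}^m$ corresponding to discrete words $w_f, w_g \in (\{0,1\}^m)^*$, the timed Manhattan distance $\dtM(f, g)$ equals exactly the generalized Manhattan distance $\sum_i \mathtt{diff}(w_f[i], w_g[i])$, since each letter is held over a unit-length interval on which $\mathtt{diff}$ is constant. By Lemma~\ref{l:AscsAreDiscrete}, the action of $\ckt$ on step-function inputs is then faithfully mirrored by the discrete letter-to-letter transducer $\trans_{\ckt}$. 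Combining these, $K$-robustness of $\ckt$ w.r.t.\ the timed Manhattan distances is equivalent to $K$-robustness of $\trans_{\ckt}$ w.r.t.\ the induced generalized Manhattan distances. Lemma~\ref{l:discretedDecidablity} puts the latter in \nlogspace~in the size of $\trans_{\ckt}$; since $\trans_{\ckt}$ is represented only implicitly but its transitions can be computed in polynomial time from $\ckt$, composing the \nlogspace~algorithm with on-demand transition evaluation yields a \pspace~procedure overall.

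For \pspace-hardness, I would reduce from a canonical \pspace-complete problem such as reachability in a succinctly represented transition system (equivalently, acceptance by a polynomial-space Turing machine). The idea is to build an \asc~whose combinational circuit implements the transition relation, using delay elements to hold a bit-encoding of the current configuration, and arranging the output so that a tiny input perturbation cascades into a large output divergence precisely when the machine accepts. Lemma~\ref{l:stepFunctionsCounterexaples} ensures that step-function inputs suffice to witness any non-robustness, so the reduction faithfully matches the \asc's $K$-robustness to the complement of the acceptance problem, yielding \pspace-hardness.
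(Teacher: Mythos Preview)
Your proposal is correct and follows essentially the same approach as the paper: the upper bound chains Lemmas~\ref{l:stepFunctionsCounterexaples}, \ref{l:AscsAreDiscrete}, and~\ref{l:discretedDecidablity} exactly as you describe, and the hardness argument reduces from reachability in succinctly represented graphs. The paper's concrete amplification gadget is the oscillator of Example~\ref{ex:oscillator}, which propagates a single-bit discrepancy indefinitely once the target vertex is reached---precisely the ``cascade'' you anticipate.
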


\begin{figure}
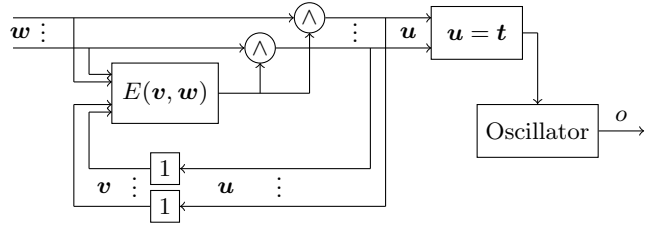

\begin{center}
\succintReachabilityCircuit
\end{center}
\caption{The diagram of an~\asc~from the reduction of the reachability in succinctly represented graphs to $K$-robustness of~\ascs.}
\label{fig:ASCreduction}
\end{figure}

\begin{proof}
Observe that the timed Manhattan distance between step functions $f,g$ equals the generalized Manhattan distance between
the words $w_f, w_g$ corresponding to step functions $f,g$. This, together with 
Lemmas \ref{l:AscsAreDiscrete} and \ref{l:stepFunctionsCounterexaples}, allows us to reduce checking $K$-robustness 
of \ascs~w.r.t. timed Manhattan distances to checking $K$-robustness of the corresponding letter-to-letter transducers w.r.t. generalized Manhattan distances.  
It then follows from Lemma~\ref{l:discretedDecidablity} that checking $K$-robustness of \ascs~is in \pspace.
Note that generalized Manhattan distances are short in this case as their descriptions are logarithmic 
in the exponential size of the letter-to-letter transducer.

The \pspace-hardness of checking $K$-robustness of \ascs~is obtained by a reduction from 
the reachability problem for succinctly represented graphs, which is \pspace-complete~\cite{lozano1990complexity}.
Succinctly represented graphs are given indirectly by a propositional formula $E(\vec{w}, \vec{v})$, where $\vec{w}, \vec{v}$ are vectors of $n$ variables.
The vertexes of the graph are binary sequences of length $n$, and two sequences are connected by an edge iff the formula $E(\vec{w}, \vec{v})$ 
on these sequences holds.
Consider the graph $G$ represented by the formula $E(\vec{v},\vec{w})$ and its vertex $\vec{t}$. 
We claim that the~\asc~given in \figref{ASCreduction} is $K$-robust iff the vertex $\vec{t}$ is 
not reachable from the zero vector $(0,\ldots, 0)$ in $G$. 
Due to Lemma~\ref{l:stepFunctionsCounterexaples} it suffices to focus on inputs that are step functions $f$, or discrete words $w_f$.
The input is interpreted as a sequence of vertexes of $G$. 
The \asc~in \figref{ASCreduction} consists of (a) a circuit $E(\vec{v},\vec{w})$ which checks whether there is an edge between
$\vec{v}$ and the input $\vec{w}$, (b) a unit that tests whether $\vec{u}$ equals the target vertex $\vec{t}$ and, 
(c) an oscillator (\ref{ex:oscillator}) which outputs $0$ when the input is $0$, and once the input is $1$, outputs $1$ 
until the end of the input. 
Initially, $\vec{v}$ is the zero vector. If there is an edge between $\vec{v}$ and $\vec{w}$, 
$\vec{u}$ is set to $\vec{w}$, and hence, $\vec{v}$ equals $\vec{w}$ in the next step and $\vec{w}$ is checked for equality with $\vec{t}$. 
If $\vec{w} = \vec{t}$, the oscillator is activated.
Otherwise, if there is no edge between $\vec{v}$ and $\vec{w}$, $\vec{u}$ is set to the zero vector, which corresponds to transitioning back to the initial vertex;
$\vec{v}$ equals the zero vector in the next step and the zero vector is checked for equality with $\vec{t}$.


If $\vec{t}$ is not reachable from the zero vector, the output of the~\asc~is always $0$, and hence 
the \asc~is $K$-robust for every $K$. 
Conversely, we claim that if $t$ is reachable from the zero vector, then the~\asc~is not $K$-robust for any $K$.
Indeed, consider a shortest path from the zero vector to the target vertex $\vec{0}, \vec{v}_1, \ldots, \vec{t}$ and 
consider the following two inputs: $i_1 = \vec{0}, \vec{v}_1, \ldots, \vec{t}, \vec{0}^K$, the path leading to activation
of the oscillator followed by $K$ inputs that are zero vectors, and, 
$i_2 = \vec{0}, \vec{v}_1, \ldots, \vec{t'}, \vec{0}^K$, which is obtained from $i_1$ by changing one bit in $\vec{t}$.
Observe that the oscillator in~\asc~is not activated on the input $i_2$, hence the output is $0$.
Therefore, while the timed Manhattan distance between the inputs is $1$,
the timed Manhattan distance between the outputs is $K+1$, for any chosen $K$. 

\end{proof}


\begin{remark}
Recall that the domain of an \asc~$\ckt$ with input alphabet $\alphI = \{0,1\}^m$ 
is given by $\dom{\ckt} = \timedL{\alphI}$. 
For any timed Manhattan distance $\dtM{\leq 1}$ over $\dom{\ckt}$ 
such that $\forall a,b \in \alphI$, $\mathtt{diff}^{\leq 1}(a,b) \leq 1$, 
Proposition \ref{RelationBetweenDistances} states
that the Skorohod distance w.r.t. $\dtM^{\leq 1}$ coincides with $\dtM^{\leq 1}$. 
Hence, $K$-robustness w.r.t. such Skorokhod distances is \pspace-complete as well. 
\end{remark}


\bibliography{papers}

\appendix
\section{Proofs from Section~\ref{s:formalModels}}

\TimedExtensDiscrete*
\begin{proof}
\noindent\textbf{(i):} Consider a timed transducer $\trans$. Let $\aut$ be a finite automaton 
that accepts the untimed language of $\trans$. Such an automaton exists, it is 
of exponential time and can be constructed out of the region graph for $\trans$~\cite{alur1994theory}.
Since $\trans$ does not have cycles labeled by $\alphO$, all path in $\aut$ labeled by $\alphO$ are finite.
We build a transducer $\trans^d$ from $\aut$ in the following way:
$\trans^d$ has the same set of states, the same initial state and the same accepting states as $\aut$.
Next, $(q,a,w,q')$ is a transition of $\trans^d$ iff there is a path in $\aut$ labeled with
$aw$, where $a \in \alphI$, $w \in \alphO$ and that path cannot be extended by a transition labeled with $\alphO$.
It follows from construction $\untimed{\funcDefinedBy{\trans}}$ and $\funcDefinedBy{\trans^d}$ coincide. 

\noindent\textbf{(ii):} Consider a discrete transducer $\trans^d$. We construct 
a timed I/O automaton $\trans$ without clocks from $\trans^d$.
Basically, we substitute each transition $(q,a,w,q')$ with $w = w[1] \ldots w[k]$
by a path $(q, a, T, \emptyset, q^{(q',w,0)})$,
$(q^{(q',w,0)}, a, T, \emptyset, q^{(q',w,1)}), \ldots, (q^{(q',w,k)}, a, T, \emptyset, q')$.
Clearly, $\untimed{\funcDefinedBy{\trans}}$ and $\funcDefinedBy{\trans^d}$ coincide.
\end{proof}




\FunctionalityPspaceComplete*
\begin{proof}
\textbf{Containment in $\pspace$:} We construct a timed automaton $\aut$ as $\trans \times \trans' \times \aut^{\neq}$, where
\begin{enumerate}
\item $\trans$ is a transducer from $\timedL{\alphI}$ to $\timedL{\alphO}$,
\item $\trans'$ is a transducer from $\timedL{\alphI}$ to $\timedL{\alphO'}$, where $\alphO' = \{ a' : a \in \alphO\}$ is disjoint from $\alphI, \alphO$,
\item $\aut^{\neq}$ is an automaton that works over $\timedL{\alphO \cup \alphO'}$ and 
accepts languages of words $\Ts \otimes \Tstr$, with $\Ts \in \timedL{\alphO}$, $\Tstr \in \timedL{\alphO'}$, such that
$\Ts$ does not correspond to $\Tstr$, i.e., if every event $(a', t)$ in $\Tstr$ is translated to $(a,t)$,
the resulting timed word is different than $\Ts$.
\end{enumerate}
The timed automaton $\aut$ accepts timed words over the alphabet $\alphI \cup \alphO \cup \alphO'$ that are counterexamples to 
the functionality property of $\trans$. Therefore, functionality of $\trans$ reduces in polynomial time to the emptiness problem 
for timed automata.
 
\textbf{$\pspace$-hardness:} We reduce the emptiness problem 
for timed automata to deciding functionality of timed transducer.
Given a timed automaton $\aut$, we transform it into a timed transducer $\trans^{b}$ by
substituting each switch $s = (l, a, g, X, l')$ with two switches 
$(l, a, g, X \cup {x^g}, l^g)$ and $(l^g, a, g \wedge g = 0, X, l')$, where $l^g$ is a fresh location and $l^g$ is a new clock common
for all new switches. Basically, the transducer $\trans$ implements the identity function on the timed language of $\aut$.
Now, we construct a transducer $\trans'$ from $\timedL{\alphI}$ to $\timedL{\alphI \cup \{\bot\}}$, such that
for every $(a_0, t_0) \ldots (a_n, t_n) \in \timedL{\alphI}$,
$(\bot, t_0) \ldots (\bot, t_n) \in \funcDefinedBy{\trans'}$. Then, such a transducer $\trans'$ is functional
iff $\aut$ accepts the empty language. 
\end{proof}

\SyntacticFunctionality*
\begin{proof}
\noindent\textbf{(1):} Consider a two timed words $\Ts, \Tstr$ over the alphabet $\alphI \cup \alphO$ such that
their projections on events over $\alphI$ are equal. We can prove by induction on the number of events in $\Ts, \Tstr$
that $\Ts, \Tstr$ are equal. 
Assume that $\Ts$ and $\Tstr$ are equal up to event $i$. Therefore, $\trans$ is in the same state  $(l, \nu)$
upon reading first $i$ events of $\Ts$ as $\Tstr$.
If $\Ts[i+1]$ and $\Tstr[i+1]$ are both input events, then it is the same event by the assumption on projections. 
Otherwise, $l$ has an outgoing switch labeled by $\alphO$, hence it is an unambiguous location.
It follows that there is exactly one switch $s$ outgoing of $l$ whose guard is satisfied by $\nu + t$ for some $t$.
In consequence, the untimed parts of $\Ts[i+1]$ and $\Tstr[i+1]$ are equal. 
Moreover, the guard of $s$ contains equality, therefore the time that $\trans$ spends in $l$ is uniquely determined, i.e.,
the timestamps of $\Ts$ and $\Tstr$ are equal. It follows that $\Ts[i+1] = \Tstr[i+1]$.
 
\noindent\textbf{(2):} Consider a deterministic functional timed transducer $\trans$ and its switch 
$(l, a, g, X, l')$. 
First, we claim that in every accepting run, every time the switch $(l, a, g, X, l')$
is taken, the value of at least one clock is equal to a constant from $g$, therefore it can be replaced
by at most linearly many (in the size of $\trans$) rigid switches.

Towards contradiction; suppose that there is an accepting run $\pi$
in which the switch is taken at a position $i$ at which no clock value equals to any guard from $g$. 
Then, consider two runs $\pi_1$, obtained by truncating $\pi$ to the first positions after $i$ at which
a switch labeled with $\alphI$ is taken (or just $\pi$ if there is no such position).
Next, $\pi_2$ obtained from $\pi_1$ by increasing the time spend in $l$ by small time so that the guard of $g$
are still satisfied. Since $\trans$ is a safe transducer,  both runs $\pi_1$ and $\pi_2$ are accepting.
Observe that both runs projected on 
events from $\alphI$ are the same. However, projections of $\pi_1$ and $\pi_2$ on events from $\alphO$
are different, which contradicts functionality of $\trans$.

Second, consider a location $l$ with outgoing switches labeled by $\alphO$.
Observe that either $l$ does not have outgoing switches labeled by $\alphI$ 
or there is no accepting run going through $l$ that takes a switch labeled with $\alphO$.
Indeed, if there is such a run and $l$ has an outgoing switch labeled with $\alphI$, then
$l$ is an accepting location. Hence, the run truncated to the position of $l$ is accepting 
and the run truncated to the first accepting position past $l$ is also accepting. 
Projections of those two runs on $\alphO$ are different, but projections on $\alphI$ 
are equal, which contradicts functionality.
It follows that we may transform the transducer to an equivalent one, whose
locations have either all outgoing switches labeled with $\alphI$ or 
all outgoing switches labeled with $\alphO$.

Finally, we can extend guards of each switch by the full information about the timed automata region,  i.e.,
a switch $s$ with a guard $g$ is substituted with switches $s_1, \ldots, s_k$ with guards $g_1, \ldots, g_k$
that are maximal conjunctions of inequalities of clocks that are consistent with $g$. 
Next, we remove switches that are not taken in any accepting run. 

Consider a location $l$ with all outgoing switches labeled by $\alphO$.
We claim that (*)~we can enrich each switch guarded by $g$ by formulas $\forall t. \neg g'(\vec{x} + t)$, where $g'$
are guards of other switches, does not change the set of accepted runs.
Observe that linear arithmetic admits quantifier elimination, hence $\forall t. \neg g'(\vec{x} + t)$ can be change to a quantifier free formula, which can be written in 
a disjunctive normal form $g'_1 \vee \ldots \vee g'_p$ such conjunctions $g_i'$ and $g_j'$ are inconsistent for $i \neq j$.
Then, we substitute the switch with the guard $g$ by $p$ switches with the same locations, label and reset variables, but 
guards $g \wedge g_1', \ldots, g \wedge g_p'$. Such guards are strongly inconsistent.

It remains to prove $(*)$. If an accepting run contains a state $(l,\nu)$ such that $\nu$ satisfies the guard $g$ of a switch $s$ and for some $t \geq 0$, 
$\nu + t$ satisfies the guard $g'$ of a switch $s'$ with $s \neq s'$, then the transducer is not functional. 
Indeed, we consider two cases. Assume that $t = 0$. Since the transducer is deterministic, $s$ and $s'$ are labeled with different letter from $\alphO$.
Similarly to the previous cases, we can construct two accepting runs, that are identical till the location $l$, but then one takes $s$
and terminates as soon as it reaches accepting location. The other run takes $s'$ ans also terminates as soon as it reaches an accepting location. 
Such two runs violate functionality property.
Assume that $t > 0$. Again we can construct two runs that violate functionality as the time spent in $l$ is different for both runs.  

\end{proof}

\begin{remark}
The safety assumption in (2)~of~Proposition~\ref{p:rigid-for-functional} is essential. 
Indeed, consider a function $f$ defined on the domain $\{ (a,0), (a, x+1) : x \in \R^+\}$ as 
$f((a, 0) (a, x+1)) = (b, x) (b, x+1)$. This function can be represented by 
a deterministic functional transducer that accepts words $(a,0), (b,x), (a,x+1) (b,x+1)$.
The switch taken on the event $(b,x)$ cannot be rigid though. 
Intuitively, in deterministic timed transducers, the timestamp of every output has to 
be fixed w.r.t. to the input events. But, it can be fixed w.r.t. proceeding input events,
or, as in the case of $f$, it can be fixed w.r.t. some of the proceeding output event. 
Unfortunately, this implies that necessary and sufficient conditions for
functionality of deterministic transducers are non-local and involve some
reachability-base conditions, which are usually $\pspace$-hard for timed automata.
\end{remark}

%
%

\section{Proofs from Section~\ref{s:robustnessAutomata}}

For our automata constructions, we find it helpful to view timed words as
starting with symbol $\sos$ and ending with symbol $\eos$.  Given alphabet
$\alphI$, let $\alphI^\eos$ denote $\alphI \cup \{\eos\}$ and $\seos{\alphI}$
denote $\alphI \cup \{\sos,\eos\}$. For a string $w$, we use $w[i]$ to refer to
the $i^{th}$ letter of $w$, with the first letter at index $0$.

\AutomataBasedRobustness*
\begin{proof}
Given automata $\autI$ computing $\dI$,
$\autI$ computing  $\dO$ and the timed transducer $\trans$ we construct a weighted timed automaton $\aut$
such that $\aut$ has a run of value less than $0$ iff $\trans$ is not $K$-robust.
We first define variants of $\autI$, $\autO$ and $\trans$ to enable these 
automata to operate over a common alphabet $\Lambda = \alphI \oplus \alphI
\oplus \alphO \oplus \alphO$.
In particular, we define automata
$\bautI$, $\bautO, \blaut_{\tran}, \braut_{\tran}$ on $\Lambda$
such that for all $\Ts,\Tstr,\Ts',\Tstr'$:
\begin{compactenum}
\item the value of $\bautI$ on $\Ts \oplus \Tstr \oplus \Ts' \oplus \Tstr'$
is equal to the value of $\autO$ on $\Ts \oplus \Tstr$,
\item the value of $\baut_{\dO}$ on $\Ts \oplus \Tstr \oplus \Ts' \oplus \Tstr'$
is equal to the value of $\aut_{\dO}$ on $\Ts' \oplus \Tstr'$,
\item  $\blaut_{\tran}$ accepts $\Ts \oplus \Tstr \oplus \Ts' \oplus \Tstr'$
iff $\aut_{\tran}$ accepts $\Ts \oplus \Ts'$, and
\item $\braut_{\tran}$ accepts $\Ts \oplus \Tstr \oplus \Ts' \oplus \Tstr'$
iff $\aut_{\tran}$ accepts $\Tstr \oplus \Tstr'$.
\end{compactenum}

Let $\bautI^{K}$, $\bautO^{-1}$ be weighted timed automata obtained by
multiplying each transition weight of $\bautI$, $\bautO$ by $K$,
$-1$, respectively.
Consider the \twa~$\aut$ defined as $\bautI^K \times
\blaut_{\tran} \times \braut_{\tran} \times \bautO^{-1}$, the synchronized product
of automata $\bautI^K, \blaut_{\tran}, \braut_{\tran}, \bautO^{-1}$ where
the weight of each transition is equal the sum of the weights of the corresponding transitions
in $\bautI^K$ and $\bautO^{-1}$.

Now, we show that there exists a word on with value below $0$ assigned by $\aut$ 
iff $\trans$ is not $K$-robust w.r.t. $\dI, \dO$.

%

Consider words $\Ts,\Tstr, \Ts',\Tstr'$ such that 
$\aut$ accepts $\Ts \oplus \Tstr \oplus \Ts' \oplus \Tstr'$.
The value assigned by $\aut$ to this timed word 
equals $K\dI(\Ts,\Tstr) + \inf_{\pi \in \Acc } -val_\pi$,
where $\Acc$ is the set of accepting runs of $\autO$ 
on $\Ts' \otimes \Tstr'$, $val_\pi$ denotes the value of run $\pi$. 
 
If the language of $\aut$ is empty for threshold $0$, it means 
that for all words $\Ts$, $\Tstr$, 
$K\dI(\Ts,\Tstr) \geq \inf_{\pi \in \Acc } val_\pi$. 
By the definition of $\autO$, $\inf_{\pi \in \Acc } val_\pi = 
\dO(\Ts',\Tstr')$. Hence, it follows that $\tran$ is $K$-robust. 

If $\autO$ is functional,
each run in $\Acc$ has the same value $\dO(\Ts', \Tstr')$.
Thus, $\lang_{\aut}(\Ts \oplus \Tstr \oplus \Ts' \oplus \Tstr')$ 
equals $K\dI(\Ts,\Tstr) - \dO(\Ts', \Tstr')$, and,
$\lang_{\aut}(\Ts \oplus \Tstr \oplus \Ts' \oplus \Tstr') < 0$ implies $\tran$ is \emph{not} $K$-robust.
Conversely, if $\tran$ is not $K$-robust there are words $\Ts, \Tstr$ such that
$\dO(\funcDefinedBy{\trans}(\Ts), \funcDefinedBy{\trans}(\Tstr)) >
K\dI(\Ts,\Tstr)$.  This implies $\aut$ accepts 
$\Ts \oplus \Tstr \oplus 
\funcDefinedBy{\trans}(\Ts) \oplus \funcDefinedBy{\trans}(\Tstr)$ and
$\lang_{\aut}(\Ts \oplus \Tstr \oplus \funcDefinedBy{\trans}(\Ts) \oplus
\funcDefinedBy{\trans}(\Tstr)) < 0$.
Thus, nonemptiness of $\aut$ and $K$-robustness of $\tran$ w.r.t. $\dI, \dO$ coincide.

\end{proof}

\TimedManhattanIsAutomatic*
\begin{proof}

Let $\aut = (\alphI \times \{1,2\}, L, \ell_0, X, F, \delta,C)$ 
be a \twa~where:
\begin{compactitem}
\item $L = \{\ell_{a,b}: a,b \in \seos{\alphI}\}$
\item $\ell_0 = \ell_{\sos,\sos}$
\item $X = \{x\}$
\item $(\ell_{a,b}, \symb, g, \reset, \ell'_{a',b'}) \in \delta$ iff $g = \true$, $\reset = \{ \}$, and exactly one of the following holds:
\begin{compactenum}
\item $\symb = (c,1)$ with $c \in \alphI$, $a' = c$ and $b' = b$, or, 
\item $\symb = (c,2)$ with $c \in \alphI$, $a' = a$ and $b' = c$ 
\end{compactenum}
\item $F = \ell_{\eos,\eos}$
\item For each $\ell_{a,b} \in L$: $C(\ell_{a,b}) = \diff(a,b)$ and for each $e \in \delta$: $C(e) = 0$
\end{compactitem}

Observe that $\aut$ is deterministic and $\lang_\aut(\s \oplus \str) = \dtM(\s,\str)$.
\end{proof}

\AccumulatedDelayIsAutomatic*
\begin{proof}

Let $M = \lceil \bound/\dur\rceil$.   
In the following, we assume that the symbol duration of any timed symbol is greater than or equal to $\dur$, 
the delay between corresponding events is less than or equal to $\bound$, 
and that timed words are well-formed, i.e., there is no timed symbol after the $\eos$ symbol  
(we do not check for any of these). 

Let $\dot{\alphI} = \{\dot{a}| a \in \alphI\}$. 

Let $\aut = ((\alphI \cup \dot{\alphI}) \times \{1,2\}, L, \ell_0, X, F, \delta,C)$ 
be a \twa~where:
\begin{compactitem}
\item $L = \{\ell_{(w,i)}: i \in \{1,2\}, |w| \leq M+1, w \in (\eps \cup \dot{\alphI}).\alphI^*\} \cup 
           \{\ell_{(w.\eos,i)}: i \in \{1,2\}, |w| \leq M, w \in \alphI^*\} \cup 
           \{\ell_{\sos}, \ell_{\eos}, \ell_{\rej}\}$
\item $\ell_0 = \ell_{\sos}$
\item $X = \{x\}$
\item $(\ell_{\alpha}, \symb, g, \reset, \ell'_{\alpha'}) \in \delta$ iff $g = \true$, $\reset = \{ \}$ and one of the following holds:
\begin{compactenum}
\item For $i \in\{1,2\}$: $\alpha = \sos$, $\symb = (c,i)$ with $c \in \alphI^\eos$,  and $\alpha' = (c,i)$, or, 
\item For $i \in\{1,2\}$: $\alpha = (w,i)$ with $w < M$, $\symb = (c,i)$ with $c \in \alphI$ and $c \neq w[|w|-1]$,  and $\alpha' = (w.c,i)$, or, 
\item For $i \in\{1,2\}$: $\alpha = (w,i)$ with $w < M$, $\symb = (c,i)$ with $c \in \alphI$ and $c = w[|w|-1]$,  and $\alpha' = (w,i)$, or, 
\item For $i \in\{1,2\}$: $\alpha = (w,i)$ with $w = M$, $\symb = (c,i)$ with $c \in \alphI$, and $\alpha' = \rej$, or, 
\item For $i \in\{1,2\}$: $\alpha = (w,i)$ with $w \leq M$, $\symb = (\eos,i)$, and $\alpha' = (w.\eos,i)$, or, 
\item For $i \in\{1,2\}$: $\alpha = (w,i)$ with $|w| > 1$, $w = c.x$, $\symb = (c,j)$ with $j = 3-i$, and $\alpha' = (\dot{c}.x,i)$
\item For $i \in\{1,2\}$: $\alpha = (\dot{f}.w,i)$ with $|w| > 1$, $w = c.x$, $\symb = (c,j)$ with $j = 3-i$, and $\alpha' = (x,i)$
\item For $i \in\{1,2\}$: $\alpha = (\dot{f}.w,i)$, $w = x$, $\symb = (c,j)$ with $c = f$, $j = 3-i$, and $\alpha' = (\dot{f}.w,i)$
\item For $i \in\{1,2\}$: $\alpha = (\dot{f}.w,i)$ with $|w| > 1$, $w = c.x$, $\symb = (d,j)$ with $d \neq c$ and $d \neq f$, 
$j = 3-i$, and $\alpha' = \rej$
\item For $i \in\{1,2\}$: $\alpha = (\dot{f}.w,i)$ with $w < M$, $\symb = (c,i)$ with $c \in \alphI$ and $c \neq w[|w|-1]$,  and $\alpha' = (\dot{f}.w.c,i)$, or, 
\item For $i \in\{1,2\}$: $\alpha = (\dot{f}.w,i)$ with $w < M$, $\symb = (c,i)$ with $c \in \alphI$ and $c = w[|w|-1]$,  and $\alpha' = (\dot{f}.w,i)$, or, 
\item For $i \in\{1,2\}$: $\alpha = (c,i)$, $\symb = (c,j)$ with $c \in \alphI$, $j = 3-i$, and $\alpha' = \sos$ 
\item For $i \in\{1,2\}$: $\alpha = (\eos,i)$, $\symb = (\eos,j)$, $j = 3-i$, and $\alpha' = \eos$ 
\item For $i \in\{1,2\}$: $\alpha = (d,i)$ with $d \neq c$, $\symb = (c,j)$ with $c \in \alphI^\eos$, $j = 3-i$, and $\alpha' = \rej$ 
\item $\alpha = \rej$, $\symb = *$, and $\alpha' = \rej$
\end{compactenum}
\item $F = \ell_{\eos}$
\item For each $\ell_{w} \in L$: $C(\ell_{w}) = |w|$ and for each $e \in \delta$: $C(e) = 0$
\end{compactitem}

Observe that $\aut$ is deterministic. We claim that $\lang_\aut(\Ts \oplus
\Tstr) = \dad(\Ts,\Tstr)$.  The main insight is as follows. $\dad(\Ts,\Tstr)$
is the sum of the {\em waiting times} for every symbol of
$\timed{\ftime{\Ts}}$, $\timed{\ftime{\Tstr}}$ for its matching symbol from
$\timed{\ftime{\Tstr}}$, $\timed{\ftime{\Ts}}$, respectively. State
$\ell_{(w,1)}$ stores the subword $w$ of $\untimed{\timed{\ftime{\Ts}}}$ that has arrived
already, and is waiting to be matched with the corresponding subword of
$\untimed{\timed{\ftime{\Tstr}}}$.  Thus, as long as a symbol $c$ of $w$ is not consumed by a
matching symbol of $\untimed{\timed{\ftime{\Tstr}}}$, we need to count the duration spent
waiting for $(c,2)$. This equals the sum of the time spent in each state
$\ell_{(x,1)}$,  visited since seeing $(c,1)$ until seeing $(c,2)$. The above
cost function ensures that the value of a run on $\Ts \oplus \Tstr$ equals
$\dad(\Ts,\Tstr)$. Once a symbol $c$ of $w$ is consumed by a matching symbol 
of $\untimed{\timed{\ftime{\Tstr}}}$, one needs to disregard subsequent 
$c$ symbols of $\Tstr$ without trying to match them 
with symbols of $\Ts$. This is because we are tracking the distance between 
$\timed{\ftime{\Ts}}$ and $\timed{\ftime{\Tstr}}$, and not $\Ts$ and $\Tstr$. 
This is taken care of using states of the form $(\dot{f}.w,i)$, which remember 
the symbol $f$ to be disregarded. 
\end{proof}

\SkorohodIsPiecewiseLinear*
\begin{proof}
\newcommand{\tilda}[1]{\widetilde{#1}}
Let $v$ be a timed word and the domain of $\ftime{v}$ is $[a,b]$.
Consider two continuous bijections from $[a,b]$ onto itself, $\wiggle_1, \wiggle_2$.
Observe that if $\wiggle_1, \wiggle_2$ agree on timestamps of the events of $v$, i.e.,
for every event $(a,t) \in u$ we have $\wiggle_1(t) = \wiggle_2(t)$ then 
$\ftime{v} \circ \wiggle_1 = \ftime{v} \circ \wiggle_2$. 

Now, let $\tilda{\wiggle} \in \Wiggles$ satisfy 
$|\lone{I - \wiggle} + \dtM(\ftime{u},\ftime{v} \circ \tilda{\wiggle}) - \dS (\ftime{u},\ftime{v})| \leq \epsilon$.
Consider a piecewise linear function $\wiggle'$ consisting of $|v|$ segments that
agrees with $\tilda{\wiggle}$ on the timestamps of the events of $v$. 
Then, $\ftime{v} \circ \tilda{\wiggle} = \ftime{v} \circ \wiggle'$ and 
$|\lone{I - \wiggle} + \dtM(\ftime{u},\ftime{v} \circ \wiggle') - \dS (\ftime{u},\ftime{v})| \leq \epsilon$.
\end{proof}

\SkorohodIsAutomatic*
\begin{proof}
Consider an alphabet $\alphI \times \{1,2,3\}$. We consider words over such an alphabet to be the disjoint union of three
words $\Ts_1, \Ts_2$ and $\Ts_3$ denoted by $\Ts_1 \oplus \Ts_2 \oplus \Ts_3$.
First, we construct a weighted timed automaton $\aut_1$, which on a word $\Ts_1 \oplus \Ts_2 \oplus \Ts_3$ 
computes the sum of the timed Manhattan distance between words $\Ts_1$ and $\Ts_3$ and
the lossy accumulated delay distance between $\Ts_2$ and $\Ts_3$. 
The automaton $\aut_1$ is a product of weighted timed automata 
that compute the timed Manhattan distance and the loosy accumulated delay. 
The automaton $\aut_2$ is a projection of $\aut_1$ on  $\alphI \times \{1,2\}$, i.e.,
it computes $\inf_{\Ts_3}  \dad(\Ts_2, \Ts_3) + \dtM(\Ts_1,\Ts_3)$. 
Observe that $\Ts_3$ can be considered as $\Ts_2 \circ \lambda$ and 
$\dad(\Ts_2,\Ts_3)$ coincides with the $L_1$-norm of $I -\lambda$.
\end{proof}

\section{Proofs from Section~\ref{s:robustnessCircuits}}

\AscsAreDiscrete
\begin{proof}
\noindent\textbf{(1):} It readily follows from  the discussion above Lemma~\ref{l:AscsAreDiscrete}.

\noindent\textbf{(2):} 
Let $M$ be the maximal delay in a given \asc.
The discrete transducer $\trans$ stores the sequence of excitation variables from the last $M+1$ rounds
$\vec{z}_0, \ldots, \vec{z}_M$, i.e., the state space
is $(\{0,1\}^k)^M$. At each step, $\trans$ shifts stored excitation variables and computes
the new value of the most recent excitation variables $\vec{z}_0$
and the output variables $\vec{o}$  using Boolean function $f,g$. In these functions, 
the values of secondary variables are obtained from appropriately delayed excitation variables.

Observe that the size of $\trans$ is exponential in the number of variables. However, 
the set of states has compact representation $(\{0,1\}^k)^M$, as well as the input and output alphabets 
$\{0,1\}^m$ and respectively $\{0,1\}^n$. Moreover, given Boolean vectors $q,q',a,b,$ of lengths $kM,kM,m,n$
$q, a,b,q'$, we can compute in polynomial time whether $\trans$ has a transition from $q$ to $q'$ upon reading $a$
at which it outputs $q'$.
Finally, $\trans$ is a deterministic letter-to-letter transducer.
\end{proof}

\stepFunctionsCounterexaples*
\begin{proof}
Consider functions $f_1, f_2$ on the domain $[0, T]$ that witness non-$K$-robustness of a given \asc, i.e.,
$\dO(\funcDefinedBy{\ckt}(f_1),\funcDefinedBy{\ckt}(f_2)) - K \cdot \dI(f_1, f_2) > 0$.
Recall that the output of the \asc~at time $t$ depends only on inputs at times $frac(t), frac(t)+1, \ldots, t$.
Therefore, we can consider separately $f_1, f_2$ and their corresponding outputs at times from $T_x  = \{ x + i : i \in \N,  x+ i \leq T \}$, i.e., 
reals from $[0,T]$ with the fractional part $x$. The value of 
$\dO(\funcDefinedBy{\ckt}(f_1),\funcDefinedBy{\ckt}(f_2)) - K \cdot \dI(f_1, f_2)$ on $T_x$ is a finite sum
$\sum_{t \in T_x} (\diffO(\funcDefinedBy{\ckt}(f_1)(t),\funcDefinedBy{\ckt}(f_2)(t)) - K \diffI(f_1(t), f_2(t)))$.
We observe that the value of $\dO(\funcDefinedBy{\ckt}(f_1),\funcDefinedBy{\ckt}(f_2)) - K \cdot \dI(f_1, f_2)$  on $[0,T]$
is the integral over $[0,1)$ of $\sum_{t \in T_x} (\diffO(\funcDefinedBy{\ckt}(f_1)(t),\funcDefinedBy{\ckt}(f_2)(t)) - K \diffI(f_1(t), f_2(t)))$ 
considered as a function of $x$. It follows that if the given $\asc$ is not $K$-robust
then there exists $x \in [0,1)$ such that $\sum_{t \in T_x} (\diffO(\funcDefinedBy{\ckt}(f_1)(t),\funcDefinedBy{\ckt}(f_2)(t)) - K \diffI(f_1(t), f_2(t)))$
is strictly positive. Clearly, step functions $g_1, g_2$ defined on each interval
$[i, i+1)$ to be equal to $f_1(x+i)$ and respectively $f_2(x+i)$ satisfy 
$\dO(\funcDefinedBy{\ckt}(g_1),\funcDefinedBy{\ckt}(g_2)) - K \cdot \dI(g_1, g_2) >0$, i.e., they witness a non-$K$-robustness of the given \asc. 
\end{proof}


\end{document}